\newcommand{\gettikzxy}[3]{%
  \tikz@scan@one@point\pgfutil@firstofone#1\relax
  \edef#2{\the\pgf@x}%
  \edef#3{\the\pgf@y}%
}
\renewenvironment{description}
  {\list{}{\labelwidth=10pt \leftmargin=15pt
   }}
  {\endlist}
\newtheorem{theorem}{Theorem}
\newtheorem{lemma}{Lemma}
\newtheorem{corollary}{Corollary}
\newtheorem{proposition}{Proposition}
\theoremstyle{definition}
\newtheorem{observation}{Observation}
\theoremstyle{remark}
\newtheorem{example}{Example}
\theoremstyle{plain}
\crefname{table}{Table}{Tables}
\crefname{figure}{Figure}{Figures}
\crefname{theorem}{Theorem}{Theorems}
\Crefname{theorem}{Thm.}{Thms.}
\crefname{definition}{Definition}{Definitions}
\crefname{corollary}{Corollary}{Corollaries}
\crefname{observation}{Observation}{Observations}
\crefname{lemma}{Lemma}{Lemmas}
\crefname{example}{Example}{Examples}
\crefname{reduction}{Reduction}{Reductions}
\crefname{construction}{Construction}{Constructions}
\crefname{subsection}{Subsection}{Subsections}
\crefname{section}{Section}{Sections}
\crefname{proposition}{Proposition}{Propositions}
\Crefname{proposition}{Prop.}{Props.}
\crefname{algorithm}{Algorithm}{Algorithms}
\newcommand{\naturals}{{{\mathbb{N}}}}
\newcommand{\integers}{{{\mathbb{Z}}}}
\newcommand{\ib}{{\mathrm{IB}}}
\newcommand{\fair}{{\mathrm{Fair}}}
\renewcommand{\div}{{\mathrm{Div}}}
\renewcommand{\top}{{\mathrm{top}}}
\newcommand{\wone}{{\mathrm{W[1]}}}
\newcommand{\wtwo}{{\mathrm{W[2]}}}
\newcommand{\xp}{{\mathrm{XP}}}
\newcommand{\np}{{\mathrm{NP}}}
\newcommand{\fpt}{{\mathrm{FPT}}}
\DeclareMathOperator{\poly}{poly}
\renewcommand{\cal}[1]{\mathcal{#1}}
\newcommand{\calF}{\cal{F}}
\newcommand{\prob}[1]{\textsc{#1}}
\newcommand{\ibknap}{\prob{Individually Best Knapsack}}
\newcommand{\dknap}{\prob{Diverse Knapsack}}
\newcommand{\fknap}{\prob{Fair Knapsack}}
\newcommand{\Optproblemdef}[3]{
	\begin{center}
  \begin{minipage}{0.95\textwidth}
    \noindent
    \textsc{#1}\vspace{1pt}\\
    \setlength{\tabcolsep}{1pt}
    \renewcommand{\arraystretch}{1.1}
    \begin{tabularx}{\textwidth}{@{}lX@{}}
	    \textbf{Input:} 		& #2 \\
	    \textbf{Task:} 	& #3
    \end{tabularx}
  \end{minipage}
	\end{center}
}
\newcommand{\mytitle}{Fair Knapsack}
\title{\mytitle\footnote{This research was initiated within the student project ``Research in Teams'' organized by the research group \emph{Algorithmics and Computational Complexity} of TU Berlin, Berlin, Germany.}
}
\author[1]{Till~Fluschnik\thanks{Supported by the DFG, projects DAMM (NI 369/13) and TORE (NI 369/18).}}
\author[2]{Piotr Skowron\thanks{Supported by a postdoctoral fellowship of the Alexander von Humboldt Foundation, Bonn, Germany and by the Foundation for Polish Science within the Homing programme (Project title: "Normative Comparison of Multiwinner Election Rules).}}
\author[1]{Mervin Triphaus} 
\author[1]{Kai Wilker}
\affil[1]{Algorithmics and Computational Complexity, Faculty~IV, TU Berlin, Germany\\
 \texttt{till.fluschnik@tu-berlin.de}, \texttt{p.skowron@mimuw.edu.pl}, \texttt{\{mervin.triphaus,wilker\}@campus.tu-berlin.de}
}
\affil[2]{University of Warsaw, Warsaw, Poland\\
 \texttt{p.skowron@mimuw.edu.pl}
}
\date{\vspace{-\baselineskip}}
\begin{document}
\sloppy
\allowdisplaybreaks

\maketitle

\begin{abstract}  %
We study the following multiagent variant of the knapsack problem. We are given a set of items, a set of voters, and a value of the budget; each item is endowed with a cost and each voter assigns to each item a certain value. The goal is to select a subset of items with the total cost not exceeding the budget, in a way that is consistent with the voters' preferences. Since the preferences of the voters over the items can vary significantly, we need a way of aggregating these preferences, in order to select the socially best valid knapsack. We study three approaches to aggregating voters' preferences, which are motivated by the literature on multiwinner elections and fair allocation. This way we introduce the concepts of individually best, diverse, and fair knapsack. 
We study the computational complexity (including parameterized complexity, and complexity under restricted domains) of the aforementioned multiagent variants of knapsack.          
\end{abstract}

\section{Introduction}\label{sec:intro}

In the classic knapsack problem we are given a set of \emph{items}, each having a cost and a value, and a budget. The goal is to find a subset of items 
with the maximal sum of the values subject to the constraint that the total cost of the selected items must not exceed the budget.
In this paper we are studying the following variant of the knapsack problem: instead of having a single objective value for each item we assume that
there is a set of \emph{agents} (also referred to as \emph{voters}) who have potentially different valuations (expressed through \emph{utilities}) of the items. When choosing a subset of items
we want to take into account possibly conflicting preferences of the voters with respect to which items should be selected:
in this paper we discuss three different approaches to how the voters' valuations can be aggregated. 

Multiagent knapsack forms an abstract model for a number of real-life scenarios. First, let us note that if the costs of the items are all the same, then the multiagent knapsack model collapses to the model for multiwinner elections~\cite{FSST-trends} (in the literature on multiwinner elections, items are often called candidates). 
Multiwinner voting rules are applicable in a broad class of scenarios, ranging from selecting a representative committee of experts, through recommendation systems~\cite{bou-lu:c:chamberlin-courant}\footnote{An example often described in the literature is when an enterprise considers which set of products should be pushed to production---it is natural to view such a problem as an instance of multiwinner elections with products corresponding to the items/candidates and potential customers to the voters.}, to resource allocation and facility location problems. In each of these settings it is quite natural to consider that different items/candidates can incur different costs. 
Further, algorithms for multiagent knapsack can be viewed as tools for the participatory budgeting problem~\cite{participatoryBudgeting}, where the authorities aggregate citizens' preferences in order to decide which of the potential local projects should obtain funding.

Perhaps the most straightforward way to aggregate voters' preferences is to select a subset (a knapsack) that maximizes the sum of the utilities of all the voters over all selected items. This approach---which we call selecting an \emph{individually best knapsack}---subject to differences in methods used for eliciting voters' preferences has been taken by Benabbou and Perny~\cite{BenPer16}, and in the context of participatory budgeting by Goel~et~al.~\cite{knapsackVoting} and Benade~et~al.~\cite{BNPS17}. 
However, by selecting an individually best knapsack we can disadvantage even large minorities of voters, which is illustrated by the following example: assume that the set of items can be divided into two subsets~$A_1$ and $A_2$, that all items have the same cost, and that $51\%$ of the voters like items from $A_1$ (assigning the utility of 1 to them, and the utility of 0 to the other items) and the remaining $49\%$ of voters like only items from $A_2$. An individually best knapsack would contain only items from $A_1$, that is $49\%$ of the voters would be effectively~disregarded.

In this paper we introduce two other approaches to aggregating voters' preferences for selecting a multiagent knapsack. 
One such approach---which we call selecting a \emph{diverse knapsack}---is inspired by the Chamberlin--Courant rule~\cite{cha-cou:j:cc} from the literature on multiwinner voting. Informally speaking, in this approach we aim at maximizing the number of voters who have \emph{at least one} preferred item in the selected knapsack. For the second approach---which is the main focus of the paper and which we call selecting a \emph{fair knapsack}---we use the concept of Nash welfare~\cite{nashWelfare50} from the literature on fair allocation. Nash welfare is a well-established solution concept that implements a tradeoff between having an objectively efficient resource allocation (knapsack, in our case), and having an allocation which is acceptable for a large population of agents. Indeed, extensive recent studies in the domain of fair allocation confirm particularly strong fairness guarantees of Nash welfare~\cite{CaragiannisKMP016,MoulinFair03,DarSch15}, and this solution concept has been applied e.g.\ in the context of public decision making~\cite{CFS17}, online resource allocation~\cite{FreemanZC17}~or transmission congestion control~\cite{Kelly97chargingand} (therein referred to as \emph{proportional fairness}). Thus, our work introduces a new application domain---now the goal is to select a set of shared items---for the concept of Nash welfare. In particular, as a side note, we will explain that our approach leads to a new class of multiwinner rules, which can be viewed as generalizations of the Proportional Approval Voting rule beyond the approval setting.   

Apart from introducing the new class of multiagent knapsack problems, our contributions are as follows: 
\setlist[enumerate,1]{leftmargin=0.5cm}
\setlist[enumerate,2]{leftmargin=1cm}
\begin{enumerate}[(1)]
\item We study the complexity of computing an optimal individually best (IB), diverse, and fair knapsack. This problem is in general $\np$-hard, except for the case of IB knapsack with unarily encoded utilities of the voters (as the IB knapsack problem is equivalent to the classic knapsack problem).
\item We study the parameterized complexity of computing a diverse and fair knapsack, focusing on the number of voters.\footnote{Considering this parameter is relevant, e.g., for the case when the set of voters is in fact a relatively small group of experts acting on behalf of a larger population of agents. 
}
We show that for unary-encoded utilities of the voters computing a diverse knapsack is fixed-parameter tractable when parameterized by the number of voters. 
On the contrary, computing a fair knapsack is $\wone$-hard for the same parameter.
\item We study the complexity of the considered problems for single-peaked and single-crossing preferences. 
We show that (under unary encoding of voters' utilities) a diverse knapsack can be computed in polynomial time when the preferences are single-peaked or single-crossing, while computing a fair knapsack remains $\np$-hard.     
\end{enumerate}

Our results are summarized in \Cref{tab:results}. 
Our main message is that fairness comes with a surprisingly high computational complexity. 
Indeed, our most unexpected results are that computing a fair knapsack is $\wone$-hard when parameterized by the number of voters, and $\np$-hard on single-peaked single-crossing domains, with unit-costs, and all utilities coming from the set $\{0, \ldots ,6\}$. 
This was unforeseen since by using a recent result of Peters~\cite{Pet17a}, one can show that computing a fair knapsack on single-peaked domains (which are not necessarily single-crossing, i.e., when one of the assumptions is weakened), with unit-costs, and all utilities coming from the set $\{0, 1\}$ (instead of $\{0, \ldots, 6\}$, i.e., when another assumption is strengthened) is polynomial-time solvable. 
Our result required a complex reduction from the exact set cover problem. 

Most of the our results are presented for the costs and utilities of the agents given in the unary-encoding. This makes our results more relevant for practical applications of our framework. E.g., for participatory budgeting (PB) one does not really need more than thousands of values to represent utilities/costs. Further, by assuming efficient encoding of the utilities/costs we would make the hardness results less meaningful---the hardness would simply be an artifact of the fact that we admit the values of the utilities/costs that are exponentially large in the number of voters. By assuming unary encoding we make---on the one hand---the hardness results stronger, and---on the other hand---more applicable to the real scenarios that our model represents. 

We also show all three problems to be $\np$-hard in the non-unary case (\cref{thm:dknapNPhNonunar,thm:nphardfairknaps}) and prove $\operatorname{W}$-hardness with respect to the budget (\cref{thm:lbdk,cor:wonefk}).

\newcommand{\smtab}[1]{\small#1}
\renewcommand{\arraystretch}{1.2}
\begin{table}[t]
  \centering
  \caption
  {
    Overview of our results (for the case of utilities encoded in unary): 
    Herein, SP and SC abbreviate single-peaked and single-crossing preferences, respectively, and ``\# voters'' refers to ``when parameterized by the number of voters''.
  }
  \begin{tabular}{@{}l|c|c|c|c@{}}  \toprule
				Knapsack	 & general 	& SP 	& SC 	& \#~voters \\ \cmidrule{1-5}
	      IB		& \multicolumn{3}{c|}{P~\smtab{(equivalent to \textsc{Knapsack})}}	& $\leftarrow$		\\%
	      Diverse		& $\np$-hard~\cite{pro-ros-zoh:j:proportional-representation}		& P	& P 	& FPT  	\vspace{-3pt}\\
					& %
					& \smtab{(\Cref{thm:dknapPolyUnarSP})}	&	\smtab{(\Cref{thm:dknapPolyUnarSP})}	& \smtab{(\Cref{thm:dkfpt})} \\%
	      Fair		& $\np$-hard & \multicolumn{2}{c|}{$\np$-hard} & W[1]-hard \vspace{-3pt}\\
					& \smtab{(\Cref{thm:nphardfairknaps})} &  \multicolumn{2}{c|}{\smtab{(\Cref{thm:fair_knapsack_hard_sp})}} & \smtab{(\Cref{thm:fair_knapsack_wone})}\\
   \bottomrule	
  \end{tabular}
  \label{tab:results}
%   \vspace{-0.05cm}
\end{table}

\section{The Model} \label{sec:defi}
For a pair of natural numbers $i, j \in \naturals$, $i\leq j$, by $[i, j]$ we denote the set $\{i, i+1, \ldots, j\}$. Further, we let $[j] = [1, j]$.

Let $V = \{v_1, \ldots, v_n\}$ be the set of $n$ \emph{voters} and $A = \{a_1, \ldots, a_m\}$ be the set of $m$ \emph{items}.
The voters have preferences over the items, which are represented as a \emph{utility profile} $u =$ \mbox{$(u_{i}(a)\mid i \in [n], a \in A)$}:
for each $i \in [n]$ and $a \in A$ we use $u_{i}(a)$ to denote the utility that $v_i$ assigns to $a$; this utility quantifies the extent to which $v_i$ enjoys~$a$.
We assume that all utilities are nonnegative integers.

Each item $a \in A$ comes with a cost $c(a) \in \naturals$, and we are given a global budget $B\in\naturals$.
We call a \emph{knapsack} a subset $S$ of items whose total cost does not exceed $B$, that is $c(S) = \sum_{a \in S} c(a) \leq B$.
Our goal is to select a knapsack that would be, in some sense, most preferred by the voters. 
Below, we describe three representative rules which extend the preferences of the individual voters over individual items to their aggregated preferences over all knapsacks. Each such a rule induces
a corresponding method for selecting the best knapsack. 
Our rules are rooted in concepts from the literature on fair division and on multiwinner elections:
\begin{description}
\item[Individually best knapsack:] this is the knapsack $S$ which maximizes the total utility of the voters from the selected items $u_{\ib}(S) = \sum_{a \in S} \sum_{v_i \in V} u_i(a)$. This defines perhaps the most straightforward way to select the knapsack: we call it individually best, because the formula $u_{\ib}(S)$ treats the items separately and does not take into account fairness-related issues. 
Indeed, such a knapsack can be very unfair, as discussed in the introduction.
Indeed, such a knapsack can be very unfair, illustrated by the following:
\end{description} 
\begin{example} %\em
Let $B$ be an integer, and consider a set of $n = B$ voters and $m = B^2$ items, all having a unit cost (\mbox{$c(a) = 1$} for each $a \in A$). Let us rename the items so that $A = \{a_{x, y}\mid x, y \in [B]\}$ and consider the following utility profile: 
\begin{align*}
u_i(a_{x,y}) =
  \begin{cases}
    L+1       & \quad \text{if} \quad i = x = 1 \\
    L                  & \quad \text{if} \quad i = x \neq 1 \\
    0                  & \quad \text{otherwise,}
  \end{cases}
\end{align*}
for some large~$L\in\naturals$. In this case, the individually best knapsack is $S_{\ib} = \{a_{1,y}\mid y \in B\}$, that is it consists only of the items liked by a single voter $v_1$. At the same time, there exists a much more fair knapsack $S_{\fair} = \{a_{x,1}\mid x \in B\}$ that for each voter $v \in V$ contains an item liked by $v$.
\end{example}

\begin{description}
\item[Diverse knapsack:] this is the knapsack $S$ that maximizes the utility $u_{\div}(S) = \sum_{v_i \in V} \max_{a \in S} u_i(a)$. 
In words, in the definition of $u_{\div}$ we assume that each voter cares only about his or her most preferred item in the knapsack. 
This approach is inspired by the Chamberlin--Courant rule~\cite{cha-cou:j:cc} for multiwinner elections and by classic models in facility location~\cite{far-hek:b:facility-location}. We call such a knapsack diverse following the convention from the multiwinner literature~\cite{FSST-trends}. Intuitively, such a knapsack represents the diversity of the opinions among the population of voters; in particular, if the preferences of the voters are very diverse, such a knapsack tries to incorporate the preferences of as many groups of voters as possible at the cost of containing only one representative item for each ``similar'' group.    

\item[Fair knapsack:] we use Nash welfare~\cite{nashWelfare50} as a solution concept for fairness. 
Formally, we call a knapsack $S$ fair if it maximizes the product $u_{\fair}(S) = \prod_{v_i \in V} \left(1 + \sum_{a \in S}u_i(a) \right)$.\footnote{Typically, Nash welfare would be defined as $\prod_{v_i \in V} \left(\sum_{a \in S}u_i(a) \right)$. In our definition, we add one to the sum $\sum_{a \in S}u_i(a)$ in order to avoid pathological situations when the sum is equal to zero for some voters. This also allows us to represent the expression we optimize as a sum of logarithms, and thus, to expose the close relation between the fair knapsack and the Proportional Approval Voting rule. When the utilities are normalized our definition results in better properties of the outcome pertaining to fairness~\cite{FMS18}. Further, all our hardness results can be formulated for a weaker (and perhaps the least disputable) notion of fairness in the following way: it is hard to decide whether an instance of the collective knapsack problem admits a solution where the sum of the utilities of all the agents is the highest (among all valid solutions) and all the agents have the same utility.} 
Alternatively, by taking the logarithm of $u_{\fair}$ we can represent fair knapsack as the one maximizing $\sum_{v_i \in V} \log(1 + \sum_{a \in S}u_i(a))$. The following example intuitively explains the type of fairness guaranteed by using the Nash welfare.   
\end{description} 

\begin{example}\normalfont
Consider six groups of voters, $V_1, \ldots, V_6$, with
\begin{align*}
&|V_1| = 300,\,|V_2| = 200,\, |V_3| = 100,\, \text{and } \\ &  |V_4| = |V_5| = |V_6| =1\text{.}
\end{align*}
Assume we have six sufficiently large groups of items, $A_1 \ldots, A_6$. Each voter from group $V_i$ assigns utility 1 to all items from group $A_i$, and zero to all other items. Finally, assume that the costs of all items are equal to 1, and that the value of the budget is 6. An individually best knapsack would consist only of the items from $A_1$. A diverse knapsack would contain one item from each set $A_i$ for $i \in \{1,\ldots,6\}$. A fair knapsack would contain 3 items from $A_1$, 2 items from $A_2$ and 1 item from $A_3$---this solution can be interpreted as fair since the number of items selected from each group is proportional to the number of voters liking items from these groups.

If we assume that the costs of the items from groups~$A_1$,~$A_2$, and~$A_3$ are equal to 3, 2, and 1, respectively, and that our budget is equal to 6, then the fair knapsack will consist of one item from each of the sets $A_1$, $A_2$, and $A_3$---this shows that each group will obtain a share of the cost of the whole knapsack which is proportional to its size.  
\end{example}

In \cref{sec:intro} we referred to the literature supporting the use of Nash welfare in various settings. Let us complement these arguments with one additional observation.  
When the utilities of the voters come from the binary set $\{0, 1\}$ and the costs of all items are one, then our multiagent knapsack framework boils down to the standard multiwinner elections model with approval preferences. 
In this case, a very appealing rule, Proportional Approval Voting (PAV), can be expressed as finding a knapsack maximizing $\sum_{v_i \in V} H(\sum_{a \in S}u_i(a))$, where $H(i)$ is the $i$-th harmonic number. This is almost equivalent to finding a fair knapsack (maximizing the Nash welfare) since the harmonic function can be viewed as a discrete version of the logarithm. 
Thus, fair knapsack can be considered an adjustment of PAV to the model with cardinal utilities and costs. In particular (as a side note), observe that the notion of fair knapsack combined with positional scoring rules induces rules that can be viewed as adaptations of PAV to the ordinal model.    

\section{Related Work}

Our work extends the literature on the multi-objective (MO) knapsack problem~\cite{KellererKnapsack}, that is on the variant of the classic knapsack problem with multiple independent functions valuating the items. 
Typically, in the MO knapsack problem the goal is to find a (the set of) Pareto optimal solution(s) according to multiple objectives defined through given functions valuating items. Our approach is different since we consider specific forms of aggregating the objectives; in particular for each of the concepts we study---for the individually best, diverse, and fair knapsack---there always exists a Pareto optimal solution; further each solution to the individually best and fair knapsack is Pareto optimal. For an overview of the literature on the MO knapsack problem (with the focus on the analysis of heuristic algorithms) we refer to the survey by Lust and Teghem~\cite{LustT12}.

Lu and Boutilier~\cite{bou-lu:c:chamberlin-courant} studied a variant of the Chamberlin--Courant rule that includes knapsack constraints and so being very similar to our diverse knapsack problem. The difference is that
\begin{inparaenum}[(i)]
\item they consider utilities which are extracted from the voters' preference rankings, thus these utilities have a specific structure, and
\item in their model the items are not shared; instead, the selected items can be copied and distributed among the voters. Lu and Boutilier consider a model with additional costs related to copying a selected item and sending it to a voter.
\end{inparaenum}
Consequently, their general model is more complex than our diverse knapsack; they also considered a more specific variant of this model, equivalent to winner determination under the Chamberlin--Courant rule.

A variant of the diverse knapsack problem with the utilities satisfying a form of the triangle inequality is known as the knapsack median problem; see the work of Byrka~et~al.~\cite{ByrPRSST15} for a discussion on the approximability of the problem.

As we discussed in the introduction, the multiagent variant of the knapsack problem has been often considered in the context of participatory budgeting, yet to the best of our knowledge this literature focuses on the simplest aggregation rule corresponding to our individually best knapsack approach~\cite{participatoryBudgeting,knapsackVoting,BNPS17}. Another avenue has been explored by Fain~et~al.~\cite{FainGM16}, who studied rules that determine the level of funding provided to different projects (items, in our nomenclature) rather than rules selecting subsets of projects with predefined funding requirements.

\section{Computing Multiagent Knapsacks}

In this section we investigate the computational complexity of finding individually best, diverse, and fair knapsack. Formally, we define the computational problem for computing a fair knapsack~as: 

\Optproblemdef{Fair Knapsack}
{An instance~$(V,A,u,c)$ and a budget~$B$.}
{Compute a knapsack~$S\subseteq A$ such that $c(S)\leq B$ and $u_{\fair}(S)$ is maximum.}

\noindent
We define the computational problems \textsc{Diverse Knapsack} and \textsc{Individually Best Knapsack} analogously---the difference is only in the expression to maximize, which for the two problems is $u_{\div}$ and $u_{\ib}$, respectively. We will use the same names when referring to the decision variants of these problems; in such cases we will assume that one additional integer $x$ is given in the input, and that the decision question is whether there exists $S$ with value $u_{\ib}(S)$ (respectively, $u_{\div}(S)$ or $u_{\fair}(S)$) at least~$x$, and~$c(S)\leq B$.  

We observe that the functions $u_{\ib}$, $u_{\div}$, and $u_{\fair}$ (when represented as a sum of logarithms---the use of the logarithm in the objective is only relevant for the approximation ratio) are submodular. 
Thus, we can use an algorithm of~\cite{Sviridenko04} with the following guarantees.

\begin{theorem}
There is a polynomial-time $(1 - \nicefrac{1}{e})$-approximation algorithm for \textsc{Individually Best Knapsack}, \textsc{Diverse Knapsack}, and \textsc{Fair Knapsack} with the objective function $\log(u_{\fair})$.
\end{theorem}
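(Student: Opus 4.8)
The statement bundles together three objective functions, but the key structural fact that makes all three amenable to the same argument is the one the paragraph already asserts: $u_{\ib}$, $u_{\div}$, and $\log(u_{\fair})$ are all nonnegative, monotone, submodular set functions on the ground set $A$. Assuming this, and assuming the budgeted-submodular-maximization result of Sviridenko~\cite{Sviridenko04} — which gives a polynomial-time $(1-\nicefrac1e)$-approximation for $\max\{f(S) : S\subseteq A,\ c(S)\le B\}$ whenever $f$ is a nonnegative monotone submodular function given by a value oracle and $c$ is a nonnegative linear cost — the theorem is essentially immediate. So the plan is: (1) recall the definition of submodularity; (2) verify it for each of the three objectives; (3) observe that each objective is evaluable in polynomial time, so it serves as a value oracle; (4) invoke Sviridenko's theorem.

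For step (2) I would handle the three cases as follows. The function $u_{\ib}(S)=\sum_{a\in S}\bigl(\sum_i u_i(a)\bigr)$ is linear in the indicator of $S$, hence trivially both monotone and (modularly, so in particular) submodular. For $u_{\div}(S)=\sum_{v_i\in V}\max_{a\in S}u_i(a)$, it suffices to note that for each fixed voter $v_i$ the function $S\mapsto\max_{a\in S}u_i(a)$ is a monotone submodular function — it is the ``maximum weight'' coverage-type function, and its marginal gain from adding an item $a$ to $S$ is $\max(0,\,u_i(a)-\max_{b\in S}u_i(b))$, which only decreases as $S$ grows — and that a nonnegative combination of monotone submodular functions is monotone submodular; summing over $i$ gives the claim. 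For $\log(u_{\fair})(S)=\sum_{v_i\in V}\log\bigl(1+\sum_{a\in S}u_i(a)\bigr)$, again fix a voter $v_i$: the inner sum $s_i(S)=\sum_{a\in S}u_i(a)$ is modular and nonnegative, and $t\mapsto\log(1+t)$ is concave and nondecreasing on $[0,\infty)$, so the composition $S\mapsto\log(1+s_i(S))$ is monotone submodular (composition of a concave nondecreasing scalar function with a nonnegative modular function preserves monotone submodularity — check the marginal-gain inequality using concavity); summing over the voters preserves the property. This also explains the parenthetical remark in the statement: the $+1$ inside the logarithm is needed precisely so that $\log(u_{\fair})$ is finite and nonnegative (when each $s_i\ge 0$ the argument is $\ge 1$), and the logarithm is what is needed for a multiplicative objective to become submodular — a raw product $\prod_i(1+s_i(S))$ is not submodular.

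Where the real content sits is a modeling subtlety rather than a deep obstacle: Sviridenko's guarantee is for the objective one actually feeds the algorithm, so for the fair knapsack it certifies a $(1-\nicefrac1e)$-approximation for the surrogate objective $\log(u_{\fair})$, i.e.\ for the Nash \emph{social welfare on the log scale}, not for $u_{\fair}$ itself; I would state this explicitly (and it is already flagged in the theorem statement by writing ``with the objective function $\log(u_{\fair})$''), since a $(1-\nicefrac1e)$ factor on the logarithm does not translate to any constant factor on the product. The only thing to double-check carefully is that each objective is a legitimate polynomial-time value oracle: $u_{\ib}$ and $u_{\div}$ are plainly computable in $O(nm)$ time, and $\log(u_{\fair})(S)=\sum_i\log(1+s_i(S))$ is computable in $O(nm)$ arithmetic operations — here one should note that only additive-error floating-point evaluation of the logarithms is available, but this is the standard (and harmless) caveat for submodular maximization with non-rational objectives, costing an extra $\varepsilon$ in the ratio. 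Apart from that, nothing beyond quoting~\cite{Sviridenko04} is required.
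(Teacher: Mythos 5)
Your proposal is correct and follows exactly the route the paper takes: the paper simply observes that $u_{\ib}$, $u_{\div}$, and $\log(u_{\fair})$ are (monotone) submodular and invokes the budgeted submodular maximization algorithm of Sviridenko, offering no further proof. Your write-up merely fills in the standard verifications (modularity, the max-type and concave-composition arguments, the polynomial-time value oracle) and correctly flags that the guarantee for the fair objective is only on the logarithmic scale, which matches the paper's phrasing of the theorem.
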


In the remainder of the paper we will focus on computing exact solutions for the three problems. In particular, we study the complexity under the following two restricted domains:
\begin{description}
\item[Single-peaked preferences.] Let $\top_i$ denote $v_i$'s most preferred item, and let $\triangleleft$ be an order of the items. We say that a utility profile $u$ is \emph{single-peaked} with respect to $\triangleleft$ if for each $a, b \in A$ and each $v_i \in V$ such that $a \triangleleft b \triangleleft \top_i$ or $\top_i \triangleleft b \triangleleft a$ we have that $u_i(b) \geq u_i(a)$.   
\item[Single-crossing preferences.] Let $\triangleleft$ be an order of the voters. We say that a utility profile $u$ is \emph{single-crossing} with respect to $\triangleleft$ if for each two items $a, b \in A$ the set $\{v_i \in V\mid u_i(b) \geq u_i(a)\}$ forms a consecutive block according to $\triangleleft$.
\end{description}

We say that a profile $u$ is single-peaked (resp., single-crossing) if there exists an order~$\triangleleft$ of the items (resp., of the voters) such that~$u$ is single-peaked (resp., single-crossing) with respect to $\triangleleft$. Note that an order witnessing single-peakedness or single-crossingness can be computed in polynomial time (see, e.g.,~\cite{ELP-trends,fit15}).

We will also study the parameterized complexity of the three problems. 
For a given parameter $p$, we say that a problem is fixed-parameter tractable (FPT) when parameterized by~$p$ if there is an algorithm (FPT algorithm) that solves each instance $I$ of the problem in~$O\big(f(p) \cdot \poly(|I|)\big)$~time, where $f$ is some computable function.
In parameterized algorithmics, FPT algorithms are considered efficient. 
There is a whole hierarchy of complexity classes, but informally speaking, a problem that is~$\wone$- or~$\wtwo$-hard is assumed not to be FPT and, hence, hard (or fixed-parameter intractable) from the parameterized point of view (see~\cite{dow-fel:b:parameterized} for more details).

\subsection{Individually Best Knapsack}

We start with the simplest case of individually best knapsack.

\begin{theorem}
  \label{thm:ieknapPoly}
 \textsc{Individually Best Knapsack} is solvable in polynomial time when the utilities of voters are unary-encoded.
\end{theorem}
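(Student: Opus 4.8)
The plan is to observe that \ibknap{} is, after a trivial preprocessing step, exactly the classical \textsc{Knapsack} problem, and then to solve the latter by a dynamic program run over the \emph{values} rather than over the budget --- the latter being important because only the utilities, and hence the aggregate item values, are guaranteed by the unary encoding to be polynomially bounded.

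First I would, for every item $a\in A$, compute its aggregate value $w(a):=\sum_{v_i\in V}u_i(a)$. Since $u_{\ib}(S)=\sum_{a\in S}\sum_{v_i\in V}u_i(a)=\sum_{a\in S}w(a)$, selecting an individually best knapsack is precisely the task of choosing $S\subseteq A$ with $c(S)\le B$ maximizing $\sum_{a\in S}w(a)$, i.e., an instance of \textsc{Knapsack} with weights $w(a)$, costs $c(a)$, and capacity $B$. Because the utilities are unary-encoded, each $u_i(a)$ is bounded by the input length, so $W:=\sum_{a\in A}w(a)\le m\cdot n\cdot\max_{i,a}u_i(a)$ is polynomial in the instance size; this is the only place the encoding assumption is used.

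Then I would set up a table $D[j][t]$, for $0\le j\le m$ and $0\le t\le W$, equal to the minimum total cost of a subset of $\{a_1,\ldots,a_j\}$ with aggregate value exactly $t$ (or $\infty$ if no such subset exists), with base cases $D[0][0]=0$ and $D[0][t]=\infty$ for $t\ge1$, and the recurrence $D[j][t]=\min\{D[j-1][t],\,D[j-1][t-w(a_j)]+c(a_j)\}$ (using the second term only when $t\ge w(a_j)$). The table has $O(mW)$ entries, each computed in constant time, so the whole procedure runs in polynomial time; the optimum value is $\max\{t:D[m][t]\le B\}$, and an optimal knapsack is recovered by standard backtracking through $D$. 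The only point that needs care --- and the reason one should not merely cite the textbook ``knapsack is pseudo-polynomial'' fact --- is that the budget $B$ may be encoded in binary and hence be exponentially large, so the usual dynamic program over the residual capacity would not run in polynomial time; dynamic programming over the value is exactly what makes the unary encoding of the utilities pay off. Beyond this bookkeeping there is no real obstacle.
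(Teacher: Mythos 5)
Your proposal is correct and follows essentially the same route as the paper: aggregate each item's utilities into a single value and run a minimum-cost dynamic program indexed by items and by achievable aggregate value, which is polynomial precisely because the unary-encoded utilities bound the value range. The only cosmetic difference is that your table stores the cheapest subset of value \emph{exactly} $t$ while the paper's stores the cheapest subset of value \emph{at least} $x$ (clamping negative indices to $0$); both variants yield the same algorithm and running time.
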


\begin{proof}
 Consider an instance~$(V=\{v_1,\ldots,v_n\},A=\{a_1,\ldots,a_m\},u,c,B)$, and let~$\hat{u}:=\sum_{v_i\in V} \sum_{a\in A} u_i(a)$.
 We apply dynamic programming with table~$T$, where~$T[i,x]$ denotes the minimal cost of $S \subseteq \{a_1,\ldots,a_i\}$ with value $u_{\ib}(S)$ \emph{at least} equal to~$x$.
 We initialize~$T[i,0]=0$ for $i \in [m]$ and~$T[0,x]=\infty$ for each~$x \in [\hat{u}]$.
 We define the helper function~$f(i,x):=x-\sum_{v_j\in V} u_j(a_i)$, where~$i\in[m]$ and~$x \in [\hat{u}]$.
 For $i\in[m]$ and~$x \in [\hat{u}]$, we have
 \[ T[i,x] = \min\left(\begin{array}{l}
                        T[i-1,x], \\
                        c(a_i) + T\left[i-1,\max(0,f(i,x))\right]
                       \end{array}
	      \right).
  \]
  By precomputing~$\sum_{v_j\in V} u_j(a)$ for each~$a\in A$, we get a running time of~$O(nm+ m\hat{u})$.
\end{proof}

Note that if the utilities are not encoded in unary, then \textsc{Fair Knapsack} is $\np$-hard even for one voter (see~\cref{thm:nphardfairknaps}).

\subsection{Diverse Knapsack}

We now turn our attention to the problem of computing a diverse knapsack. 
Through a straightforward reduction from the standard knapsack problem, we get that \dknap{} is computationally hard even for profiles which are both single-peaked and single-crossing, unless the utilities are provided in unary encoding.

\begin{theorem}
  \label{thm:dknapNPhNonunar}
 \textsc{Diverse Knapsack} is $\np$-hard even for single-peaked and single-crossing utility profiles.
\end{theorem}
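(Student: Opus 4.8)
The plan is to give a polynomial-time many-one reduction from the decision version of the classic \textsc{Knapsack} problem: given items $1,\dots,m$ with integer weights $w_1,\dots,w_m$ and profits $p_1,\dots,p_m$, a capacity $W$, and a target $t$, decide whether some $J\subseteq\{1,\dots,m\}$ satisfies $\sum_{j\in J}w_j\le W$ and $\sum_{j\in J}p_j\ge t$. This problem is (weakly) $\np$-hard, and it remains so when all profits satisfy $p_j\ge 1$ (it already contains \textsc{Subset Sum}, where $p_j=w_j\ge 1$); this mild assumption is used below only to guarantee that every voter we construct has a well-defined unique peak. Since the unary-utility, restricted-domain versions of \dknap{} are polynomial-time solvable (see \Cref{thm:dknapPolyUnarSP}), the only room the reduction has is to introduce large \emph{utilities}, so the weights and profits of the \textsc{Knapsack} instance will be mapped to costs and (blown-up) utilities of the \dknap{} instance.

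Concretely, from such an instance I build a \dknap{} instance with items $a_1,\dots,a_m$, costs $c(a_j)=w_j$, budget $B=W$, and one voter $v_j$ per item. Set $N:=m^2+1$ and, reading the items in the order $a_1\triangleleft a_2\triangleleft\cdots\triangleleft a_m$, define
\[
  u_j(a_k)=\begin{cases} N\cdot p_j,& k=j,\\ m+1-|j-k|,& k\neq j.\end{cases}
\]
Thus each $v_j$ has a strictly ``tent-shaped'' utility whose unique peak is $a_j$: the profit $p_j$, scaled up by $N$, is stored in the peak value, while all off-peak utilities are small positive integers decreasing by one per step away from the peak.

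The proof then splits into two tasks. \emph{(a) The profile is single-peaked w.r.t.\ $\triangleleft$ and single-crossing w.r.t.\ the voter order $v_1\triangleleft\cdots\triangleleft v_m$.} Single-peakedness is immediate from the tent shape together with $Np_j>m\ge u_j(a_k)$ for $k\neq j$. For single-crossing one verifies that, for $i<j$, the set $\{v_\ell : u_\ell(a_j)\ge u_\ell(a_i)\}$ equals the suffix $\{v_{\lceil(i+j)/2\rceil},\dots,v_m\}$ and the set $\{v_\ell : u_\ell(a_i)\ge u_\ell(a_j)\}$ equals the prefix $\{v_1,\dots,v_{\lfloor(i+j)/2\rfloor}\}$, both consecutive blocks. \emph{(b) The objective recovers the profit.} For any knapsack $S$ one has $u_{\div}(S)=N\sum_{a_j\in S}p_j+r(S)$ with $r(S)=\sum_{a_j\notin S}\max_{a_k\in S}u_j(a_k)$, and since every off-peak value is at most $m$ one gets $0\le r(S)\le m^2<N$; hence $\sum_{a_j\in S}p_j=\lfloor u_{\div}(S)/N\rfloor$. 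Therefore, setting the target value $x:=N\cdot t$, there is a knapsack $S$ with $c(S)\le B$ and $u_{\div}(S)\ge x$ if and only if the original \textsc{Knapsack} instance is a yes-instance. As the construction is computable in polynomial time and yields a single-peaked and single-crossing profile, $\np$-hardness of \dknap{} on such profiles follows, and the optimization version is at least as hard.

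The step I expect to require the most care is the single-crossing verification in (a): one must ensure that the single, much-larger peak value $Np_j$ does not break monotonicity along the voter order — which is exactly why peaks are only ever compared against the small tent values $m+1-|j-k|\le m<N$ — and one must handle the parity/midpoint cases of $\lceil(i+j)/2\rceil$ correctly. A secondary point is the choice of $N$: because utilities must be integers, the off-peak values cannot be made negligible, so the blow-up factor must strictly exceed the worst-case noise bound $m^2$ in order for the floor identity $\sum_{a_j\in S}p_j=\lfloor u_{\div}(S)/N\rfloor$ to hold uniformly over all knapsacks $S$.
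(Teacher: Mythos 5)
Your reduction is correct and follows essentially the same route as the paper's own proof: both reduce from \textsc{Knapsack} with one voter per item, store each profit in a blown-up peak utility whose scaling factor strictly exceeds the total off-peak ``noise'' (your $N=m^2+1$ versus the paper's $3n^2$), and set costs equal to weights and the budget to the capacity. The only cosmetic difference is the shape of the off-peak utilities---your symmetric tent $m+1-|j-k|$, with the preference crossing at the midpoint $\lceil (i+j)/2\rceil$, versus the paper's asymmetric values $j$ and $2n-j+1$, where the crossing occurs immediately after the peak voter---and both variants yield single-peaked, single-crossing profiles with the same floor-extraction argument for correctness.
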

\begin{proof}
  We present a many-one reduction from~\textsc{Knapsack}.
  Let~$(X=\{x_1,\ldots,x_n\},x,y)$ be an instance of \textsc{Knapsack} where each~$x_i$ comes with value~$\nu(x_i) \geq 1$ and weight~$\omega(x_i)$;
  the question is whether there exists $S \subseteq X$ with $\sum_{x_i \in S} \nu(x_i) \geq x$ and $\sum_{x_i \in S} \omega(x_i) \leq y$.
  We set our set of items~$A=\{a_1,\ldots,a_n\}$ with~$c(a_i):=\omega(x_i)$ for each $i\in [n]$.
  We add~$n$ voters~$v_1,\ldots,v_n$ with
  \begin{align*}
  u_i(a_j):=
  \begin{cases}
       3n^2 \cdot \nu(a_j),& i=j,\\
       j,& i > j, \\
       2n-j + 1,&  i < j.
  \end{cases}
  \end{align*}
  It is immediate that for each $S$ we have that $\sum_{a_i \in S} c(a_i) = \sum_{x_i \in S} \omega(x_i)$. 
  Further, $\sum_{v_i \in V} \max_{a_j \in S}u_i(a_j) \geq 3n^2 x$ if and only if $\sum_{x_j \in S} \nu(x_j) \geq x$, which proves the correctness. 
  It is immediate to check that the utility profile is single-peaked and single-crossing. 
\end{proof}

Note that \dknap{} is~$\np$-hard for utilities encoded in unary as it generalizes the Chamberlin--Courant rule, which is computationally hard~\cite{pro-ros-zoh:j:proportional-representation}. 
For single-peaked or single-crossing profiles the Chamberlin-Courant rule is computable in polynomial time~\cite{bet-sli-uhl:j:mon-cc,sko-yu-fal-elk:j:sc-cc}. 
These known algorithms can be extended 
to the case of the diverse knapsack. 

\begin{theorem}
  \label{thm:dknapPolyUnarSP}
 \textsc{Diverse Knapsack} is solvable in polynomial time when the utility profile is%
 \begin{enumerate}[(i)]
  \item single-peaked and encoded in unary;
  \item single-crossing and encoded in unary.
 \end{enumerate}
\end{theorem}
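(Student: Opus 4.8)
The plan is to give dynamic-programming algorithms for the two cases, mirroring the known polynomial-time algorithms for Chamberlin--Courant under these restricted domains, but adapted to handle the knapsack (budget) constraint. In both cases the key structural fact is that once we know the order $\triangleleft$ witnessing single-peakedness (resp.\ single-crossingness)---which by the remark in the excerpt can be computed in polynomial time---the set of selected items, viewed through $\triangleleft$, has a restricted interaction pattern with the voters' "representative" items, and this is what makes a polynomial-sized DP table possible. The budget constraint is folded into the table by adding a coordinate that tracks the cost spent so far, which is polynomial since costs are integers and we may assume $B\le c(A)$ (and the relevant sums are bounded by $\sum_{a\in A}c(a)$, polynomial in the unary input size).

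\emph{Single-crossing case.} Fix the voter order $v_1 \triangleleft \cdots \triangleleft v_n$. The standard observation for single-crossing Chamberlin--Courant is: in an optimal solution, the map sending each voter to its most-preferred selected item, read along $\triangleleft$, partitions the voters into consecutive blocks, each block being "served" by a single selected item. The idea is to process voters $v_1,\dots,v_n$ in order, and define $T[i, a, b]$ = maximum achievable $u_{\div}$-value on the first $i$ voters, using a knapsack that costs exactly $b$, under the constraint that $v_i$'s representative is item $a$ and all of $v_1,\dots,v_i$ are represented by items selected so far. Transitions either extend the current block (keep $a$ as representative for $v_{i+1}$, paying nothing new, provided this is consistent) or start a new block with a new representative item $a'$ (paying $c(a')$ if $a'$ not already bought). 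The subtlety is bookkeeping the cost of items that are purchased but do not serve any voter, or that are shared across non-adjacent blocks; here I would exploit that single-crossingness forces, for any two items, the set of voters preferring one to the other to be an interval, so the "served-by" structure can be made to respect a single linear sweep without double-charging. The table has size $O(n \cdot m \cdot c(A))$ and each entry is computed in $O(m)$ time, giving a polynomial algorithm.

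\emph{Single-peaked case.} Fix the item order $a_1 \triangleleft \cdots \triangleleft a_m$. Here the relevant structural fact (as in single-peaked Chamberlin--Courant) is that a voter $v_i$ with peak $\top_i$, given a selected set $S$, is represented by whichever element of $S$ is closest to $\top_i$ along $\triangleleft$---i.e.\ by the largest selected item $\le \top_i$ or the smallest selected item $\ge \top_i$, whichever gives higher utility. I would process items left to right and let the DP state record the index of the most recently selected item (and its cost tally), so that when we decide the next selected item we can correctly "close out" all voters whose peaks lie strictly between the last selected item and the next one: each such voter's contribution is determined by comparing its utility for the left neighbor versus the right neighbor in $S$. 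Concretely, $T[j, b]$ = maximum total $u_{\div}$-contribution from all voters whose peak is $\le$ the position of the last selected item, over all valid selections of items among $a_1,\dots,a_j$ with $a_j$ selected and total cost $b$; the transition to the next selected item $a_{j'}$ adds $c(a_{j'})$ to the cost and adds $\sum_{i:\, \text{peak of }v_i \text{ between } a_j \text{ and } a_{j'}} \max(u_i(a_j), u_i(a_{j'}))$, which is precomputable. A boundary gadget (or two sentinel items of zero cost at the ends) handles voters whose peaks lie before the first or after the last selected item. Again the table is of polynomial size and each transition costs $O(nm)$ after precomputation.

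\emph{Main obstacle.} The genuine difficulty, in both cases, is the interaction between the budget and the representation structure: in plain Chamberlin--Courant one may freely choose exactly $k$ representatives, but here purchasing an item costs $c(a)$ regardless of how many voters it represents, and an item may be "useful" to voters in non-contiguous positions (single-crossing) or be the right neighbor for some peaks and the left neighbor for others (single-peaked). The key is to argue that one can always restructure an optimal solution so that the charging of each selected item's cost happens at a well-defined point in the sweep (e.g.\ charge $c(a)$ when $a$ first becomes the "active" representative), and that no item needs to be purchased unless it is the best representative for at least one voter---this last claim follows because utilities are nonnegative, so dropping a useless item never decreases $u_{\div}$ and never increases cost. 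Making this precise, so that the DP recurrences are provably both sound and complete, is the real content; the running-time analysis is then routine.
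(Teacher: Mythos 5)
Your high-level plan is the same as the paper's: for (i) a left-to-right sweep over the items in the single-peaked order whose state remembers the last selected item, and for (ii) a sweep over the voters in the single-crossing order exploiting that an optimal knapsack admits a ``connected'' assignment of voters to their best selected items. (Your accounting in (i), charging each voter $\max(u_i(a_j),u_i(a_{j'}))$ at the gap between consecutive selected items, is a correct alternative to the paper's incremental quantity $d(i,j)=\sum_{\ell}\max(0,u_\ell(a_i)-u_\ell(a_j))$.) However, two points do not go through as written. First, both of your tables carry a coordinate for the accumulated cost, and you justify its polynomial range by ``the relevant sums are bounded by $\sum_{a\in A}c(a)$, polynomial in the unary input size.'' The theorem only assumes the \emph{utility profile} is unary-encoded; the costs $c(a)$ and the budget $B$ may be binary-encoded and exponentially large, so a cost-indexed table is not of polynomial size under the stated hypothesis. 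The repair is the standard role swap that the paper performs: index the table by the achieved $u_{\div}$-value $x\in[0,\hat u]$ with $\hat u=\sum_{v_i\in V}\sum_{a\in A}u_i(a)$ (polynomial precisely because utilities are unary) and store the minimum cost, comparing against $B$ only at the end.

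Second, in the single-crossing case you correctly identify the crux---an item might be the best representative of non-adjacent voter blocks, and a naive sweep either double-charges its cost or must argue this never happens---but you leave it as ``the real content'' to be made precise, so the proof is not complete. The paper's resolution is worth noting because it sidesteps the restructuring you propose: it does not require the DP entries to equal the true minimum cost. It introduces \textsc{Ordered Diverse Knapsack} and proves a two-sided sandwich: any trace of the recurrence yields a feasible knapsack of value at least $x$ and cost at most $T[n,x]$, so double-charging only overestimates and $T[n,x]$ can never undercut the true minimum cost of reaching value $x$ (\cref{lemma:Tdk}); conversely $T[n,x]\leq c(S)$ for a cost-minimal connected solution $S$ (\cref{lemma:Todk}); and under the single-crossing order the argmax assignment of an optimal diverse knapsack is connected (\cref{obs:odkeqdk}). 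Together these give exactness at the optimum without ever forbidding an item from being charged twice. To finish your version you would need either this kind of sandwich argument or a direct proof that your recurrence with an explicit ``current representative'' coordinate is simultaneously sound (every table entry is realizable within the claimed cost) and complete (some optimal solution is swept without an item serving two separated blocks).
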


\begin{proof}[Proof of (i)]
  Consider an input instance~$(V=\{v_1,\ldots,v_n\},A=\{a_1,\ldots,a_m\}, u, c, B)$, where~$A$ is enumerated such that the order is single-peaked (note that such an ordering can be computed in polynomial time~\cite{ELP-trends}).
  Let~$\hat{u}:=\sum_{v_i\in V} \sum_{a\in A} u_i(a)$.
  We apply dynamic programming with table~$T$, where~$T[i,x]$ denotes the minimal cost of a subset $S \subseteq \{a_1,\ldots,a_i\}$ containing~$a_i$ ($a_i \in S$) with value \emph{at least} equal to~$x$ ($u_{\div}(S) \geq x$).
  We define the helper function~
  $$f(i,x):=\begin{cases}
	    c(a_i),& \text{if }\sum_{v_j\in V} u_j(a_i)\geq x,\\
	    \infty,& \text{otherwise}.
	  \end{cases}$$
  We initialize~$T[1,x]=f(1,x)$ for all~$x\in[0,\hat{u}]$.
  Then we set
  \[ T[i,x] = \min\left(\begin{array}{l}
                        f[i,x], \\
                        c(a_i) + \min_{1\leq j< i} T\left[j,x-d(i,j)\right]
                       \end{array}
	      \right),
  \]
  where $d(i,j):=\sum_{v_\ell\in V} \max(0,u_\ell(a_i)-u_\ell(a_j))$.
  Let~$M:=\{(i,x)\mid T[i,x]\leq B\}$.
  Then we can derive the value of the best diverse knapsack from~$\max\{x\mid \exists i\colon (i,x)\in M\}$.
  Clearly, when the utilities are unarily encoded, we can compute all entries of~$T$ and set~$M$ in polynomial time.
  
  We inductively argue over~$1\leq i\leq n$.
  Clearly, the best diverse knapsack over item set~$\{a_1\}$ has cost~$c(a_1)$.
  Consider~$T[i,x]$ with~$i>1$.
  Let~$A_i\subseteq\{a_1,\ldots,a_i\}$ be a set of items with value at least~$x$ of minimal cost containing~$a_i$.
  Then, either~$A_i=\{a_i\}$, or $A_i=A_j\cup\{a_i\}$, where~$j\in[i-1]$ with~$a_j\in A_i$ and there is no~$j'\in[i-1]$ such that $a_{j'} \in A_i$ and~$j<j'<i$.
  In the first case,~$c(A_i)=c(a_i)=f[i,x]$.
  Consider the second case.
  Clearly, $c(A_j)=c(A_i)-c(a_i)$.
  Let~$V_1 = \big\{v_{\ell} \in V\mid u_\ell(a_j)\leq u_{\ell}(a_i)\big\}$ and let~$V_2:=V\setminus V_1$.
  From the single-peakedness, we have that~$u_\ell(a_i)-\max_{1\leq j'\leq j} u_\ell(a_{j'}) = u_\ell(a_i) - u_\ell(a_j)$, if~$v_\ell\in V_1$, and clearly~$u_\ell(a_i)-\max_{1\leq j'\leq j}u_\ell(a_{j'}) <0$, if~$v_\ell\in V_2$.
  Hence, the value of~$A_i$ is greater than the value of $A_j$ by~$\sum_{v_\ell\in V} \max(0,u_\ell(a_i)-u_\ell(a_j))=d(i,j)$.
\end{proof}

\cref{thm:dknapPolyUnarSP}(i) is proven via straight-forward dynamic programming and omitted due to space constraints.
We prove our result for single-crossingness.
Let us define a set of useful tools. We will also use these tools later on, when analyzing the parameterized complexity of the problem.   

Given a tuple of voters~$\vec{V}=(v_1,\ldots,v_n)$ and a subset~$S\subseteq A$ of items, we define an~\emph{assignment}~$\pi_{S,\vec{V}}$ as a surjection~$[n]\to S$.
An assignment is called \emph{connected}, if for every~$s\in S$ it holds that $\pi_{S,\vec{V}}^{-1}(s):=\{i\in [n]\mid s=\pi_{S,\vec{V}}(i)\}=[x,y]$ for some~$x,y\in [n]$, $y \geq x$.
For our first tool we introduce the following auxiliary problem.
\medskip
\Optproblemdef{Ordered Diverse Knapsack}
{An instance~$(\vec{V},A,u,c)$ where $\vec{V}=(v_1,\ldots,v_n)$ is ordered and a budget~$B$.}
{Compute a knapsack~$S\subseteq A$ such that $c(S)\leq B$, and $u_{\rm Ord}(S)=\max_{\text{connected~}\pi_{S,\vec{V}}} \sum_{i=1}^n u_i(\pi_{S,\vec{V}}(i))$ is maximum.}
\smallskip
If~$S=\{s_1,\ldots,s_\ell\}\subseteq A$ is a cost-minimal solution to~\textsc{Diverse Knapsack} on~$(V,A,u,c)$, then let~$V_i:=\{v_j\in V\mid s_i \in \arg\max_{a\in S} u_j(a)\}$.
Consider an ordering~$\vec{V}=(V_1,\ldots,V_\ell)$, where for each~$i\in[\ell]$, the voters in~$V_i$ are arbitrarily ordered.
Then it is not difficult to see that the assignment
$$\pi_{S,\vec{V}}(i)=\arg\max_{a\in S} u_i(a)$$ is connected.
Hence, we obtain the following connection between \textsc{Diverse Knapsack} and \textsc{Ordered Diverse Knapsack}.

\begin{observation}
 \label{obs:odkeqdk}
 There is an ordering~$\vec{V}$ on the voters~$V$ such that there is an~$S\subseteq A$ that forms a cost-minimal solution for \textsc{Ordered Diverse Knapsack} and for~\textsc{Diverse Knapsack}.
\end{observation}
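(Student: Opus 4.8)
The plan is to verify that the pair $(\vec{V},S)$ constructed in the paragraph preceding the statement already does the job, by sandwiching the two objective values. First I would record a trivial inequality that holds for \emph{every} ordering $\vec{V}$ and \emph{every} knapsack $T$: a connected assignment is in particular an assignment, and for any assignment $\pi_{T,\vec{V}}$ we have $\sum_{i=1}^n u_i(\pi_{T,\vec{V}}(i)) \le \sum_{i=1}^n \max_{a\in T} u_i(a)$; hence $u_{\rm Ord}(T) \le u_{\div}(T)$.

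Next I would fix a cost-minimal optimal solution $S=\{s_1,\dots,s_\ell\}$ of \textsc{Diverse Knapsack}, further choosing one with the fewest items (so that each $s_i$ is a top item, within $S$, of at least one voter; any dead item could be dropped at no extra cost since $S$ is cost-minimal). Then form $\vec{V}=(V_1,\dots,V_\ell)$ by assigning each voter $v_j$ to exactly one block $V_i$ with $s_i\in\arg\max_{a\in S}u_j(a)$ (fixing a tie-breaking rule turns the possibly overlapping sets into an honest ordered partition of $V$) and listing the blocks consecutively. The map $\pi_{S,\vec{V}}$ sending the $i$-th voter to the item indexing its block is then a surjection onto $S$ (every block is non-empty by the fewest-items choice) and is connected (each block occupies a contiguous range of positions), and by construction $u_i(\pi_{S,\vec{V}}(i)) = \max_{a\in S} u_i(a)$ for all $i$. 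Therefore $u_{\rm Ord}(S) \ge \sum_{i=1}^n u_i(\pi_{S,\vec{V}}(i)) = u_{\div}(S)$, and together with the trivial inequality this gives $u_{\rm Ord}(S) = u_{\div}(S)$.

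It remains to check that, with respect to this $\vec{V}$, the set $S$ is a cost-minimal optimal solution of \textsc{Ordered Diverse Knapsack}. For optimality, any knapsack $T$ with $c(T)\le B$ satisfies $u_{\rm Ord}(T) \le u_{\div}(T) \le u_{\div}(S) = u_{\rm Ord}(S)$, where the middle step uses optimality of $S$ for \textsc{Diverse Knapsack}. For cost-minimality, if some $T$ were optimal for \textsc{Ordered Diverse Knapsack} with $c(T) < c(S)$, then $u_{\div}(T) \ge u_{\rm Ord}(T) = u_{\rm Ord}(S) = u_{\div}(S)$, whence $u_{\div}(T) = u_{\div}(S)$ and $T$ would be an optimal \textsc{Diverse Knapsack} solution strictly cheaper than $S$, contradicting the choice of $S$.

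I do not expect a genuine obstacle: the whole argument is the two-line sandwich above. The only point that needs care is making the construction literally well-defined — turning the overlapping sets $V_i$ into an ordered partition of $V$ via a fixed tie-breaking rule, and excluding items of $S$ that are nobody's favourite by passing to a cost-minimal optimal \textsc{Diverse Knapsack} solution with the fewest items — so that $\pi_{S,\vec{V}}$ is simultaneously a surjection onto $S$ and connected.
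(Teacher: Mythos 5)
Your proposal is correct and follows essentially the same route as the paper: group the voters by their favourite item within a cost-minimal \textsc{Diverse Knapsack} solution~$S$, order them block by block so that the induced assignment is connected, and conclude via the inequality $u_{\rm Ord}\le u_{\div}$. You merely spell out details the paper leaves implicit (tie-breaking to make the blocks a genuine partition, the fewest-items choice to guarantee surjectivity, and the explicit sandwich argument for optimality and cost-minimality), which is a welcome tightening rather than a different approach.
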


Next, we give a dynamic program for computing knapsacks that qualitatively lie ``between'' optimal solutions for \textsc{Ordered Diverse Knapsack} and~\textsc{Diverse Knapsack} (what we mean by ``lying in between'' is specified later on).

Let us fix an input~$(V,A,u,c, B)$ and an ordering~$\vec{V}=(v_1,\ldots,v_n)$ of the voters. 
We set~$\hat{u}:=\sum_{i=1}^n \sum_{a\in A} u_i(a)$.
We give a dynamic program with table~$T$, where $T[i,x]$ denotes ``some'' cost of a knapsack with a value assigned by voters from $(v_1,\ldots,v_i)$ at least equal to~$x$.
We set $$T[1,x]=\min\{c(a)\mid a\in A, u_1(a)\geq x\},$$ if there is an $a\in A$ such that $u_1(a)\geq x$, and $T[1,x]=\infty$ otherwise.
We define a helper function
\[ f(i,a,x)=\begin{cases} c(a),& \text{if~~}\sum_{j=1}^{i}u_j(a)\geq x,\\\infty,&\text{otherwise.}\end{cases}\]
We set
\begin{align}
T[i,x]&=\min_{a\in A}
	  \left(\begin{array}{l}
	    f(i,a,x),\\
	    c(a)+\min\limits_{\mathclap{j\in[i-1]}}\ T[j,\max(0,x-\sum\limits_{\mathclap{\ell=j+1}}^i u_\ell(a))]
	    \end{array}
	  \right).
  \label{eq:tabT}
\end{align}
  
\begin{observation}
  \label{obs:Tpt}
  When the utilities are unarily encoded, we can compute all entries of~$T$ in polynomial time.
\end{observation}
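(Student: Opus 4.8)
The statement is a routine bookkeeping argument, so the plan is simply to bound the size of the table and the cost of evaluating each recurrence. First I would observe that, since all utilities are encoded in unary, the quantity $\hat{u} = \sum_{i=1}^n \sum_{a \in A} u_i(a)$ is bounded by a polynomial in the input length; hence the table $T$ has only $n \cdot (\hat u + 1)$ entries, which is polynomial. Thus it suffices to show that each entry can be computed in polynomial time, provided the entries are filled in order of increasing first coordinate $i$ (so that every entry $T[j,\cdot]$ with $j < i$ referenced on the right-hand side of \eqref{eq:tabT} is already available).

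Next I would handle the arithmetic inside the recurrence. The only non-constant-time ingredients are the partial sums $\sum_{\ell=j+1}^i u_\ell(a)$ appearing both in the definition of the helper function $f(i,a,x)$ and inside the inner minimum. I would precompute prefix sums $P[i,a] := \sum_{\ell=1}^{i} u_\ell(a)$ for all $i \in [0,n]$ and all $a \in A$; this takes $O(nm)$ time, and afterwards $\sum_{\ell=j+1}^{i} u_\ell(a) = P[i,a] - P[j,a]$ is available in $O(1)$ time. With these at hand, for a fixed pair $(i,x)$ the value $T[i,x]$ in \eqref{eq:tabT} is a minimum over $a \in A$ of an expression that itself is a minimum over $j \in [i-1]$ of already-computed table entries; this costs $O(mn)$ time. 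The base case $T[1,x] = \min\{c(a) \mid a \in A,\ u_1(a) \ge x\}$ (or $\infty$ if no such $a$ exists) is computed by a single scan over $A$ in $O(m)$ time.

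Putting the pieces together, the total running time is $O(nm)$ for the prefix sums plus $O(n \cdot \hat u \cdot mn)$ for filling the table, which is polynomial in the input size under unary encoding. There is no real obstacle here; the one point that must be stated carefully is that unary encoding is precisely what makes $\hat u$ — and therefore the table — polynomially bounded, and that the fill order must respect the dependence of $T[i,\cdot]$ on $T[j,\cdot]$ for $j < i$.
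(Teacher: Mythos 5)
Your proposal is correct and is exactly the routine counting-plus-prefix-sums argument the paper has in mind; the paper states this as an observation without an explicit proof precisely because the table has only $n\cdot(\hat u+1)$ entries, each computable by scanning $a\in A$ and $j<i$, which is polynomial under unary encoding. No issues.
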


\begin{lemma}
  \label{lemma:Tdk}
 Let~$S$ be a cost-minimal solution to~\textsc{Diverse Knapsack} on~$(V,A,u,c)$ and let~$x=u_{\rm Div} (S)$.
 Then~$T[n,x]\geq c(S)$.
\end{lemma}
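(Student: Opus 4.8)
The plan is to argue that the dynamic program never ``cheats'': whenever $T[n,x]$ is finite, the recurrence \eqref{eq:tabT} implicitly assembles a set of items $S'$ of cost at most $T[n,x]$ whose diverse value $u_{\rm Div}(S')$ is already at least $x$; cost-minimality of $S$ then pins down $c(S')\ge c(S)$, which gives the claim. When $T[n,x]=\infty$ there is nothing to prove, so I would assume it is finite.

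Concretely, I would first prove the following auxiliary statement by induction on $i\in[n]$: for every $x\in[0,\hat u]$ with $T[i,x]<\infty$ there is a non-empty $S'\subseteq A$ with $c(S')\le T[i,x]$ and $\sum_{\ell=1}^{i}\max_{a\in S'}u_\ell(a)\ge x$. For the base case $i=1$, the finite value $T[1,x]$ is by definition the cost of some item $a$ with $u_1(a)\ge x$, so $S'=\{a\}$ works. For the inductive step, pick an item $a\in A$ attaining the outer minimum in \eqref{eq:tabT}. If this minimum equals $f(i,a,x)$, then $\sum_{j=1}^{i}u_j(a)\ge x$ and $T[i,x]=c(a)$, so again $S'=\{a\}$ works. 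Otherwise $T[i,x]=c(a)+T[j,\max(0,x-\sigma)]$ for some $j\in[i-1]$ with $\sigma=\sum_{\ell=j+1}^{i}u_\ell(a)$; in particular $T[j,\max(0,x-\sigma)]<\infty$, so the induction hypothesis yields a non-empty $S''\subseteq A$ with $c(S'')\le T[j,\max(0,x-\sigma)]$ and $\sum_{\ell=1}^{j}\max_{a'\in S''}u_\ell(a')\ge x-\sigma$. Put $S'=S''\cup\{a\}$. Subadditivity of $c$ gives $c(S')\le c(S'')+c(a)\le T[i,x]$, and splitting $\sum_{\ell=1}^{i}$ at $j$ and using $\max_{a'\in S'}u_\ell(a')\ge\max_{a'\in S''}u_\ell(a')$ for $\ell\le j$ together with $\max_{a'\in S'}u_\ell(a')\ge u_\ell(a)$ for $j<\ell\le i$ yields $\sum_{\ell=1}^{i}\max_{a'\in S'}u_\ell(a')\ge(x-\sigma)+\sigma=x$, completing the induction.

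To finish, I would apply the auxiliary statement with $i=n$ and $x=u_{\rm Div}(S)$, obtaining a non-empty $S'\subseteq A$ with $c(S')\le T[n,x]$ and $u_{\rm Div}(S')=\sum_{\ell=1}^{n}\max_{a\in S'}u_\ell(a)\ge x$. If $c(S')\le B$, then $S'$ is a knapsack, so its diverse value cannot exceed the optimum value $x=u_{\rm Div}(S)$; hence $u_{\rm Div}(S')=x$, i.e.\ $S'$ is an optimal diverse knapsack, and cost-minimality of $S$ forces $c(S)\le c(S')$. If instead $c(S')>B$, then $c(S')>B\ge c(S)$ since $S$ is itself a knapsack. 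In both cases $T[n,x]\ge c(S')\ge c(S)$, as desired.

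The only delicate point is making sure the $\max(0,\cdot)$-truncation in \eqref{eq:tabT} does not destroy value during the induction; this works because truncating only relaxes the residual target, while the split of $\sum_{\ell=1}^{i}$ at the index $j$ exactly matches the voter block $\{j+1,\dots,i\}$ that the newly added item $a$ is credited for. One should also keep $S'$ non-empty throughout so that the inner maxima are well defined, and note that the degenerate cases $x=0$ and $T[\cdot,\cdot]=\infty$ are immediate.
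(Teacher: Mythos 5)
Your proof is correct and takes essentially the same route as the paper: it unwinds the recurrence \eqref{eq:tabT} to extract a witness set of cost at most $T[n,x]$ whose diverse value is at least $x$, and then invokes the cost-minimality (and optimality) of $S$. You merely formalize as an explicit induction, with the extra case distinction on whether $c(S')\le B$, what the paper sketches as a backtracking argument by contradiction under the assumption $T[n,x]<c(S)$.
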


\begin{proof}
  Suppose that this is not the case, that is, $T[n,x]<c(S)$.
  Then we construct a knapsack~$S'$ from~$T[n,x]$ as follows.
  Let~$a\in A$ be an item that minimizes~\eqref{eq:tabT} for~$T[n,x]$, then make~$a\in S'$.
  If~$T[n,x]=f(n,a,x)$, then~$T[n,x]=c(a)< c(S)$, contradicting the fact that~$S$ is cost-minimal.
  Otherwise,
  \begin{align*}
  T[n,x]=c(a)+T[j,x':=\max(0,x-\sum\limits_{\ell=j+1}^n u_\ell(a))]\text{,}
  \end{align*}
  for some~$j\in[n-1]$. Then we proceed towards a contradiction as before: 
  Let~$a'\in A$ be an item that minimizes~\eqref{eq:tabT} for~$T[j,x']$, then make~$a'\in S'$, and continue the same reasoning.
\end{proof}

\begin{lemma}
  \label{lemma:Todk}
 Let~$S$ be a cost-minimal solution to~\textsc{Ordered Diverse Knapsack} on~$(\vec{V},A,u,c)$ where $\vec{V}$ is ordered and let~$x=u_{\rm Ord} (S)$.
 Then~$T[n,x]\leq c(S)$.
\end{lemma}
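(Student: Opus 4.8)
The idea is to read off the block structure of a connected assignment that realizes $u_{\rm Ord}(S)=x$ and to match it, block by block, against the recurrence~\eqref{eq:tabT} defining $T$. Concretely, the first step is to fix a connected surjection $\pi:=\pi_{S,\vec V}\colon[n]\to S$ with $\sum_{i=1}^n u_i(\pi(i))=x$ (such a $\pi$ exists by the definition of $u_{\rm Ord}$, and in particular $S\neq\emptyset$). Writing $S=\{s_1,\ldots,s_\ell\}$ in the order in which the blocks occur along $\vec V$, connectedness of $\pi$ yields indices $0=p_0<p_1<\cdots<p_\ell=n$ with $\pi^{-1}(s_k)=[p_{k-1}+1,p_k]$ for every $k\in[\ell]$. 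Define the prefix values $x_k:=\sum_{i=1}^{p_k}u_i(\pi(i))$, so that $x_0=0$, $x_\ell=x$, and $x_k=x_{k-1}+\sum_{i=p_{k-1}+1}^{p_k}u_i(s_k)$; note $0\le x_k\le x\le\hat u$, so every pair $(p_k,x_k)$ lies in the table's range.

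The core of the argument is the claim that $T[p_k,x_k]\le\sum_{j=1}^k c(s_j)$ for all $k\in[\ell]$, which I would prove by induction on $k$. For $k=1$, since $\sum_{j=1}^{p_1}u_j(s_1)=x_1$ the helper term $f(p_1,s_1,x_1)$ equals $c(s_1)$, hence $T[p_1,x_1]\le c(s_1)$ (when $p_1=1$ the same bound follows directly from the base definition $T[1,x_1]=\min\{c(a)\mid u_1(a)\ge x_1\}$). For the inductive step with $k\ge2$, evaluate~\eqref{eq:tabT} for $T[p_k,x_k]$ at the admissible choices $a=s_k$ and $j=p_{k-1}$ (admissible because $1\le p_{k-1}<p_k\le n$); this gives
\[
T[p_k,x_k]\ \le\ c(s_k)+T\!\left[p_{k-1},\max\!\Big(0,\ x_k-\textstyle\sum_{\ell=p_{k-1}+1}^{p_k}u_\ell(s_k)\Big)\right]\ =\ c(s_k)+T[p_{k-1},x_{k-1}],
\]
where the equality uses $x_k-\sum_{\ell=p_{k-1}+1}^{p_k}u_\ell(s_k)=x_{k-1}\ge0$. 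Applying the induction hypothesis to $T[p_{k-1},x_{k-1}]$ completes the step. Taking $k=\ell$ yields $T[n,x]=T[p_\ell,x_\ell]\le\sum_{j=1}^\ell c(s_j)=c(S)$, as required.

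I do not expect a genuine obstacle here: the statement is a direct induction, and the work is essentially bookkeeping. The two points that need care are (a) verifying that every index pair queried during the recursion stays inside the table (which holds since $0\le x_{k-1}\le x\le\hat u$ and $1\le p_{k-1}<p_k\le n$), and (b) handling the first block correctly regardless of whether $p_1=1$. Conceptually, the reason the proof goes through is exactly that connectedness forces each item of $S$ to be assigned to a contiguous block of voters in the order $\vec V$ — which is precisely the structure that the recurrence~\eqref{eq:tabT} is built to assemble one block at a time.
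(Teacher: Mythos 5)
Your proof is correct and follows essentially the same route as the paper's: decompose the connected assignment into consecutive voter blocks, define the prefix values $x_k$, and bound $T[p_k,x_k]$ inductively by plugging $a=s_k$, $j=p_{k-1}$ into the recurrence (the paper's argument is just a terser version of this, and your extra care with the base case $p_1=1$ and table-range checks is harmless bookkeeping).
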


\begin{proof}
  Assume $S=\{s_1,\ldots,s_\ell\}$ being enumerated.
  Let $\pi_{S,\vec{V}}$ be a connected assignment such that~$\sum_{v_i \in V} u_i(\pi_{S,\vec{V}}(i))=x$.
  Let~$0=i_0<i_1<\ldots<i_\ell=n$ be such that $[i_j+1,i_{j+1}]=\pi_{S,\vec{V}}^{-1}(s_{j+1})$ for every~$0\leq j<\ell$.
  Moreover, let~$x_j:=\sum_{i=1}^{i_j} u_i(\pi_{S,\vec{V}}(i))$, for~$j \in [\ell]$.
  By our definition of~$T$, we have that~$T[i_1,x_1]\leq c(s_1)$.
  Moreover, we have~$T[i_2,x_2]\leq T[i_1,x_1]+c(s_2)$.
  It follows inductively that~$T[i_\ell,x_\ell]\leq c(S)$.
\end{proof}

We have all ingredients at hand to prove our main results.

\begin{proof}[Proof of \cref{thm:dknapPolyUnarSP}(ii)]
If~$\vec{V}$ is an order witnessing single-crossingness, then there is an~$S\subseteq A$ that forms a cost-minimal solution for \textsc{Ordered Diverse Knapsack} and for~\textsc{Diverse Knapsack}. 
\cref{lemma:Tdk,lemma:Todk} guarantee that our algorithm will find it.
\end{proof}

Further, we can use our tools to obtain an FPT algorithm (for the number of voters) for unrestricted domains.

\begin{theorem}
  \label{thm:dkfpt}
 \textsc{Diverse Knapsack} is FPT when parameterized by the number of voters and the utilities are unarily encoded.
\end{theorem}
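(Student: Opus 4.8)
The plan is to reuse, essentially verbatim, the machinery built for the single-crossing case (Observation~\ref{obs:odkeqdk}, Observation~\ref{obs:Tpt}, and \cref{lemma:Tdk,lemma:Todk}), and to compensate for the lack of a single ``good'' voter ordering by trying all of them. The only place where single-crossingness was used in the proof of \cref{thm:dknapPolyUnarSP}(ii) was to know an ordering $\vec V$ witnessing \cref{obs:odkeqdk}; since there are only $n!$ orderings of the $n$ voters and $n!$ depends on $n$ alone, exhaustive search over orderings is affordable in an FPT algorithm parameterized by the number of voters.

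Concretely, the algorithm is: if $\mathrm{OPT}=0$ (all utilities are zero, or no nonempty knapsack fits the budget), return $\emptyset$; otherwise, for every permutation $\vec V$ of the $n$ voters, build the table $T_{\vec V}$ from \eqref{eq:tabT} — by Observation~\ref{obs:Tpt} this costs $\poly(|I|)$ time per ordering since the utilities are unary — set $g(\vec V):=\max\{x : T_{\vec V}[n,x]\leq B\}$, and output $\max_{\vec V} g(\vec V)$ together with a witnessing knapsack obtained by backtracking through $T_{\vec V}$. The running time is $n!\cdot\poly(|I|)$, hence FPT in $n$.

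For correctness I would show $\max_{\vec V} g(\vec V)=\mathrm{OPT}$. For ``$\leq$'', fix any ordering $\vec V$ and any $x$ with $T_{\vec V}[n,x]\leq B$: the backtracking construction in the proof of \cref{lemma:Tdk} produces a set $S'$ of items and, for each voter $v_i$, an item $a_i\in S'$, with $c(S')\leq T_{\vec V}[n,x]\leq B$ and $\sum_i u_i(a_i)\geq x$; then $u_{\div}(S')=\sum_i\max_{a\in S'}u_i(a)\geq\sum_i u_i(a_i)\geq x$, so $\mathrm{OPT}\geq x$. For ``$\geq$'', let $S^\star$ be a cost-minimal optimal diverse knapsack, so $c(S^\star)\leq B$ and $u_{\div}(S^\star)=\mathrm{OPT}=:x^\star$, and let $\vec V^\star$ be the ordering obtained by listing the voters grouped by a fixed choice of favorite item in $S^\star$ (as in the discussion preceding \cref{obs:odkeqdk}). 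Then the assignment $\pi_{S^\star,\vec V^\star}(i)=\arg\max_{a\in S^\star}u_i(a)$ is connected and has value $x^\star$, so $u_{\rm Ord}(S^\star)=x^\star$ with respect to $\vec V^\star$; since moreover $u_{\rm Ord}(S)\leq u_{\div}(S)\leq\mathrm{OPT}=x^\star$ for every feasible $S$, the value $x^\star$ is exactly the optimum of \textsc{Ordered Diverse Knapsack} on $\vec V^\star$. Picking a cost-minimal such solution $S^{\star\star}$ (which satisfies $c(S^{\star\star})\leq c(S^\star)\leq B$ and $u_{\rm Ord}(S^{\star\star})=x^\star$) and applying \cref{lemma:Todk} yields $T_{\vec V^\star}[n,x^\star]\leq c(S^{\star\star})\leq B$, hence $g(\vec V^\star)\geq x^\star$. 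The two inequalities give the claim, and the backtracking step also recovers an optimal knapsack.

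The main obstacle is the soundness (``$\leq$'') direction: for a ``wrong'' ordering $\vec V$ the table $T_{\vec V}$ need not reflect the unordered optimum, so one has to argue that every finite entry $T_{\vec V}[n,x]$ nonetheless corresponds to an honestly realizable knapsack of diverse value at least $x$ — this is precisely what the backtracking argument of \cref{lemma:Tdk} delivers, and it is what makes it legitimate to take the maximum over all $n!$ orderings. Beyond this, everything is either inherited from the single-crossing analysis or a routine running-time count; the degenerate $\mathrm{OPT}=0$ case is handled separately since the dynamic program always selects at least one item.
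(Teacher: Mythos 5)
Your proposal is correct and follows essentially the same route as the paper: enumerate all $n!$ voter orderings, run the dynamic program \eqref{eq:tabT} for each (polynomial per ordering by \cref{obs:Tpt}), use \cref{obs:odkeqdk} together with \cref{lemma:Todk} for completeness on the ``good'' ordering, and the backtracking argument behind \cref{lemma:Tdk} for soundness on arbitrary orderings. You merely spell out the soundness direction and the degenerate all-zero case a bit more explicitly than the paper does; no substantive difference.
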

\begin{proof}
 By~\cref{obs:odkeqdk}, we know that there is an order~$\vec{V}$ on the voters such that there is an~$S\subseteq A$ that forms a cost-minimal solution for \textsc{Ordered Diverse Knapsack} and for~\textsc{Diverse Knapsack}.
 Together with~\cref{lemma:Tdk,lemma:Todk} we obtain that for~$\vec{V}$ our dynamic program will find such~$S$.
 Hence, for each ordering of~$V$, we compute~$T[n,x]$.
 Then, we take the minimum over all observed values.
 Note that~$x$ is the largest value such that~$T[n,x]\leq B$ for some ordering of the voters.
 Altogether, this yields a running time of~$O(n!\poly(\hat{u}+n+m))\subseteq O(2^{n\log n}\poly(\hat{u}+n+m))$.
\end{proof}

Finally, we complement~\cref{thm:dkfpt} by proving a lower bound on the running time (via reducing from the $\wtwo$-complete \textsc{Dominating Set} problem), assuming the Exponential-Time Hypothesis (ETH).

\begin{proposition}
  \label{thm:lbdk}
 \textsc{Diverse Knapsack} with binary utilities and unary costs is~$\wtwo$-hard when parameterized by~the budget~$B$ and, unless the ETH breaks, there is no~$2^{o(n+m)}\cdot \poly(n+m)$ algorithm.
\end{proposition}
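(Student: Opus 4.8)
The plan is to give a polynomial-time parameterized reduction from \textsc{Dominating Set}, which is $\wtwo$-complete when parameterized by the solution size~$k$ and which, by the standard chain of linear-vertex reductions from $3$-SAT (via the sparsification lemma), admits no $2^{o(N)}\cdot\poly(N)$-time algorithm on $N$-vertex graphs unless the ETH fails. Given a graph $G=(V_G,E_G)$ with $V_G=\{w_1,\dots,w_N\}$ and an integer $k\le N$, I would construct a \dknap{} instance with item set $A=\{a_1,\dots,a_N\}$, unit costs $c(a_j)=1$ for all $j$, voter set $V=\{v_1,\dots,v_N\}$, utilities $u_i(a_j)=1$ if $w_j$ lies in the closed neighborhood $N_G[w_i]$ and $u_i(a_j)=0$ otherwise, budget $B:=k$, and target value $x:=N$ for the decision question.

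For correctness, observe that since all utilities lie in $\{0,1\}$, for every knapsack $S$ we have $u_{\div}(S)=\sum_{i\in[N]}\max_{a\in S}u_i(a)=\bigl|\{\,i\in[N]\mid S\cap\{a_j\mid w_j\in N_G[w_i]\}\neq\emptyset\,\}\bigr|\le N$, with equality precisely when $D_S:=\{w_j\mid a_j\in S\}$ is a dominating set of $G$. Since every item has cost $1$, the constraint $c(S)\le B$ is equivalent to $|D_S|\le k$. Hence $G$ has a dominating set of size at most $k$ if and only if the constructed instance admits a knapsack $S$ with $c(S)\le B$ and $u_{\div}(S)\ge x$; moreover the produced instance has binary utilities and unary costs, as required.

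The reduction maps the parameter $k$ to the budget $B=k$, so $\wtwo$-hardness parameterized by $B$ follows from that of \textsc{Dominating Set}. For the running-time lower bound, the instance has $n=N$ voters and $m=N$ items, so $n+m=2N$; an algorithm running in $2^{o(n+m)}\cdot\poly(n+m)$ time would solve \textsc{Dominating Set} in $2^{o(N)}\cdot\poly(N)$ time, contradicting the ETH.

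The only point requiring care is pinning down the ETH lower bound for \textsc{Dominating Set} measured in the number of vertices; I would obtain it by composing the textbook gadget reduction $3\text{-SAT}\le\textsc{Vertex Cover}$ (which, after sparsification, outputs a graph with $O(n)$ vertices and edges) with the classical reduction $\textsc{Vertex Cover}\le\textsc{Dominating Set}$ that introduces one fresh vertex per edge, so that the final graph still has $O(n)$ vertices. Everything else in the argument is routine.
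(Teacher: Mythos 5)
Your reduction is exactly the one the paper uses: one unit-cost item and one voter per vertex, utilities given by closed neighborhoods, budget $B=k$, and the observation that $u_{\div}(S)=N$ precisely when the chosen vertices dominate $G$, which yields both the $\wtwo$-hardness (since $B=k$) and the ETH bound (since $n+m=2N$). Your added care in tracing the linear-size chain of reductions from $3$-SAT to \textsc{Dominating Set} only makes explicit what the paper leaves implicit; the argument is correct.
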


\begin{proof}
 We give a many-one reduction from \textsc{Dominating Set}. 
 An instance of dominating set consists of a graph $G=(W,E)$ and an integer $k$; 
 the question is whether there exists a subset~$S$ of at most~$k$ vertices such that for each vertex~$w\in W$ there is an~$s\in S$ such that~$w\in N_G[s]$, where $N_G[s]=\{v\in W\mid \{v,s\}\in E\}\cup\{s\}$ denotes the closed neighborhood of~$s$ in~$G$.
 For each vertex $w \in W$, we introduce a voter~$v_w$ to~$V$ and an item~$a_w$ to~$A$ of cost one. We set~$u_{v_w}(a_{w'})=1$, if~$w'\in N_G[w]$, and $u_{v_w}(a_{w'})=0$, otherwise.
 Furthermore, we set the budget~$B=k$.
 It is not difficult to see that there is a diverse knapsack~$S$ with~$c(S)\leq B$ and~$u_{\rm Div}(S)\geq n$ if and only if~$(G,k)$ is a yes-instance.
 As~$B=k$ and~$|V|=|A|=n$, the lower bounds follow.
\end{proof}

\subsection{Fair Knapsack}

Let us now turn to the problem of computing a fair knapsack.
We first prove that the problem is $\np$-hard, even for restricted cases, and then we study its parameterized complexity.

\begin{theorem}
 \label{thm:nphardfairknaps}
\fknap{} is $\np$-hard, even
 \begin{enumerate}
  \item for one voter;
  \item for two voters and when all costs are equal to one;
  \item if all utilities are in~$\{0,1\}$ and all costs are equal to one.
 \end{enumerate}
\end{theorem}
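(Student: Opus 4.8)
The plan is to reduce from a standard NP-hard problem—I would use \textsc{Partition} (or \textsc{Subset Sum}) for parts (1) and (3), since the multiplicative/logarithmic objective naturally produces a "balance" condition, and \textsc{Vertex Cover} or \textsc{Exact Cover by 3-Sets} for the low-utility combinatorial cases. Let me sketch each part.

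\textbf{Part (1), one voter.} With a single voter $v_1$, the objective is $u_{\fair}(S) = 1 + \sum_{a\in S} u_1(a)$, so \textsc{Fair Knapsack} with one voter is exactly the classical \textsc{Knapsack} problem (maximize total value subject to the cost bound). Since utilities here are encoded in binary, \textsc{Knapsack} is NP-hard, and the reduction is immediate: given a \textsc{Knapsack} instance with values $\nu(x_i)$ and weights $\omega(x_i)$ and target $x$, set $A=\{a_1,\dots,a_n\}$, $c(a_i)=\omega(x_i)$, $u_1(a_i)=\nu(x_i)$, keep the same budget, and ask whether $u_{\fair}(S)\ge x+1$. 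This shows binary-encoded utilities are essential (contrast with \cref{thm:ieknapPoly}).

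\textbf{Part (2), two voters, unit costs.} Here I would reduce from \textsc{Partition}: given positive integers $s_1,\dots,s_n$ with $\sum_i s_i = 2T$, create $n$ items all of cost $1$, set budget $B=n$ (so every knapsack is allowed and the optimal one is $S=A$), and set $u_1(a_i)=s_i$, $u_2(a_i)=0$ for a first block and mirror it—more carefully, the trick is to make the two voters' total utilities over $A$ equal, so that $u_{\fair}(A)=(1+\alpha)(1+\beta)$ with $\alpha+\beta$ fixed is maximized exactly when $\alpha=\beta$, and a \textsc{Partition} solution corresponds to achieving this balance within a sub-knapsack when $B<n$. Concretely: put $u_1(a_i)=s_i$, $u_2(a_i)=s_i$ is useless (it's symmetric); instead use two "colors" of items—items $a_i$ with $u_1(a_i)=s_i,\,u_2(a_i)=0$, plus a single large item giving voter $2$ a fixed bulk utility—and set $B$ so exactly half the $s_i$-mass can be chosen. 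The product $(1+p)(1+q)$ with $p+q$ upper-bounded by $2T$ and $q$ forced is maximized iff $p$ hits a specific target, i.e., iff the selected subset sums to $T$. I would tune constants so that the threshold $x$ in the decision version exactly detects the \textsc{Partition} solution. The main subtlety is arithmetic bookkeeping to ensure that no "unbalanced but larger-sum" knapsack beats the balanced one; padding with a common baseline utility handles this.

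\textbf{Part (3), $\{0,1\}$ utilities, unit costs.} This is the multiwinner-elections special case (PAV-like), so I would reduce from a covering/packing problem whose optimum forces proportional spread. A natural choice is \textsc{Exact Cover by 3-Sets} (X3C) or \textsc{Vertex Cover}: given a universe and sets, make each set an item (cost $1$), make each universe element a voter who approves exactly the items containing it, set $B$ to the target number of sets, and argue via strict concavity of $\log$ (equivalently of the harmonic-like function) that the Nash welfare $\sum_i \log(1+|S\cap A_i|)$ is maximized precisely when each voter is covered exactly once—i.e., when $S$ is an exact cover. The hard part, and the step I expect to be the main obstacle, is precisely this: showing that no non-exact cover (where some voter is covered twice or thrice and another zero times) can achieve an objective value as high as an exact cover. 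This needs the strict concavity inequality $\log(1+0)+\log(1+2) < 2\log(1+1)$, i.e. $\log 3 < 2\log 2 = \log 4$, which holds, together with a counting argument that any non-exact selection of $B$ items necessarily leaves some voter uncovered. I would set the decision threshold $x$ to the exact-cover value $q\log 2$ (where $q$ is the number of voters) and verify both directions carefully. All three reductions run in polynomial time and produce instances with the claimed restrictions, establishing NP-hardness in each case.
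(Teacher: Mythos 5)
Your parts (1) and (3) are essentially sound and close to the paper's own route: part (1) is just the observation that one-voter \fknap{} is classical \textsc{Knapsack} with binary-encoded values (the paper reduces from \textsc{Partition}, but that is the same idea), and part (3) is the paper's argument with \textsc{X3C} in place of \textsc{Exact Regular Set Packing}: budget $k$, $3k$ voters, total coverage at most $3k$, and strict concavity/AM--GM forcing every voter to be covered exactly once at the threshold value $2^{3k}$. That works.

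Part (2), however, has a genuine gap. With unit costs the budget bounds only the \emph{number} of selected items, so your plan to ``set $B$ so exactly half the $s_i$-mass can be chosen'' cannot be implemented---the budget has no handle on the $s_i$-mass at all. Moreover, in your construction $u_1(a_i)=s_i$, $u_2(a_i)=0$ plus one bulk item for voter $2$, the second coordinate $q$ of the Nash product $(1+p)(1+q)$ is fixed once the bulk item is taken, and the product is then monotone increasing in $p$; nothing forces $p$ to hit the target $T$, and the optimum is simply to grab the items with the largest $s_i$, which is computable greedily. So no balance condition is created and \textsc{Partition} is not encoded. The missing idea (used in the paper) is to make the two voters' utilities \emph{complementary per item}, e.g.\ $u_1(a_i)=T+s_i$ and $u_2(a_i)=T+\frac{T}{k}-s_i$, so that every item contributes the same total $2T+\frac{T}{k}$ to the pair of voters. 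Then the cardinality bound $B=k$ fixes the combined utility of any feasible knapsack, and maximizing the product forces an exactly equal split between the two voters, which corresponds to a size-$k$ subset of the $s_i$ summing to $T/2$; accordingly one reduces from an \emph{exact} (cardinality-constrained) version of \textsc{Partition} rather than plain \textsc{Partition}. Without some device of this kind, the ``arithmetic bookkeeping'' you defer is not bookkeeping but the heart of the reduction, and the sketch as written does not establish hardness for two voters with unit costs.
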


\begin{proof}

(1): We provide a many-one reduction from the \textsc{Partition} problem~\cite{gar-joh:b:int}: Given a set~$S=\{s_1,\ldots,s_n\}$ of~$n$ positive integers, the question is to decide whether there exists a subset~$S'\subseteq S$ such that $\sum_{s\in S'} s=\frac{1}{2}\sum_{s\in S} s$. 
Given an instance $(S)$ of \textsc{Partition} where all integers are divisible by two, we construct an instance of \fknap{} as follows.
Let $T:=\sum_{s\in S} s$.
For each $s_i\in S$, we introduce an item $a_i$ with cost~$s_i$.
Further, we introduce one voter~$v_1$ with utility $u_1(a_i)=s_i$ for each $i\in[n]$.
We set the budget~$B=T/2$, and we ask if there exists a knapsack with a Nash~welfare~$W$ of at least~$T/2+1$.

Let $(S)$ be a yes-instance and let $S'\subseteq S$ be a solution.
Then, the subset of items $A':=\{a_i\in A\mid s_i\in S'\}$ forms a fair knapsack, as $\sum_{a\in A'} c(a)=\sum_{s\in S'} s= T/2\leq B$, and the Nash welfare is at least $1+\sum_{a\in A'} (u_1(a))=1+\sum_{s\in S'} s= T/2+1$

Conversely, let the constructed instance of \fknap{} be a yes-instance, and let~$A'\subseteq A$ be a fair knapsack.
Denote by~$S'$ the subset of integers in~$S$ corresponding to the items in~$A'$.
Then it holds that $\sum_{s\in S'} s=\sum_{a\in A'} c(a)\leq T/2$.
Moreover, $1+\sum_{s\in S'} s=1+\sum_{a\in A'} (u_1(a))\geq T/2+1$.
Together, both inequalities yield~$\sum_{s\in S'} s=T/2$, and hence~$S'$ forms a solution to~$(S)$.

\smallskip
\noindent
(2): We provide a many-one reduction from the \textsc{Exact Partition} problem: Given a set~$S=\{s_1,\ldots,s_n\}$ of~$n$ positive integers and an integer $k\geq 1$, decide whether there is a subset~$S'\subseteq S$ with $|S'|=k$ such that $\sum_{s\in S'} s=\frac{1}{2}\sum_{s\in S} s$.
Given an instance $(S,k)$ of \textsc{Exact Partition} where all integers are divisible by two and by~$k$, we construct an instance of~\fknap{} as follows.
Similarly as before we set $T := \sum_{s\in S} s$. 
For each $s_i\in S$, we introduce an item $a_i$ with cost~$1$.
Further, we introduce two voters, $v_1$ and~$v_2$, with utility functions~$u_1(a_i)=T+s_i$ and~$u_2(a_i)=T+\frac{T}{k}-s_i$, $1\leq i\leq n$, respectively.
We set the budget~$B=k$ and ask for a knapsack with a Nash~welfare~$W$ at least equal to~$(1 + kT + T/2)^2$.

Let $(S,k)$ be a yes-instance and let $S'\subseteq S$ be a solution.
We claim that the subset of items $A':=\{a_i\in A\mid s_i\in S'\}$ forms an appropriate fair knapsack.
It holds that~$\sum_{a\in A'} c(a)=|S'|=k\leq B$.
The Nash welfare is at least 
\begin{align*}
\left(1+\sum_{a\in A'} u_1(a)\right)\cdot \left(1+\sum_{a\in A'} u_2(a)\right)
&=\left(1+kT+\sum_{s\in S'} s\right) \cdot \left(1+kT+\sum_{s\in S'} \left(\frac{T}{k}-s\right)\right) \\
&=\left(1+kT+ T/2\right)\left(1 +kT + T-\sum_{s\in S'} s\right)\\
&=\left(1 + kT + T/2\right)^2. 
\end{align*}

Conversely, let the constructed instance of \fknap{} be a yes-instance, and let~$A'\subseteq A$ be a corresponding fair knapsack.
Let~$S'$ denote the subset of integers in~$S$ corresponding to the items in~$A'$.
Then it holds true that $|S'|=\sum_{a\in A'} c(a)\leq B=k$.
Moreover, for each item~$a\in A$ it holds true that~$u_1(a)+u_2(a)=2T + T/k$, and hence $\sum_{a\in A'}(u_1(a)+u_2(a))=|A'|(2T + T/k)$.
The product~$(1+\sum_{a\in A'} u_1(a))\cdot (1+\sum_{a\in A'} u_2(a))$ is maximal if~$\sum_{a\in A'}u_1(a)=\sum_{a\in A'}u_2(a)$, leading to~$\sum_{a\in A'}u_1(a)=|A'|\left(T + \frac{T}{2k}\right)$.
Together, it follows that~$|A'|=k$, and hence~$S'$ forms a solution for~$(S,k)$.
\smallskip

\noindent
(3): We provide a many-one reduction from the \textsc{Exact Regular Set Packing (ERSP)} problem (there is a straight-forward parameterized reduction from \textsc{Exact Regular Independent Set}~\cite{aus-atr-pro:j:structure}): 
Given a set~$X$, set~$\calF=\{F_1,\ldots,F_m\}$ of subsets of~$X$ with~$|F_i|=d$ for all $i\in[m]$, and an integer $k\geq 1$, decide whether there exists a subset~$\calF'\subseteq \calF$ with $|\calF'|= k$ such that for each distinct~$F,F'\in \calF$ it holds true that~$F\cap F'=\emptyset$.
Let $(X=\{x_1,\ldots,x_n\},\calF=\{F_1,\ldots,F_m\},k)$ be an instance of \textsc{ERSP} where $|F_i|=d$ for all $i\in[m]$.
We construct an instance of~\fknap{} as follows.
Let $A:=\{a_i\mid F_i\in \calF\}$ be the set of items each with cost equal to one.
Further, we introduce~$n$~voters with $u_i(a_j)=1$ if $x_i\in F_j$, and $u_i(a_j)=0$ otherwise, for all~$i\in[n]$, $j\in[m]$.
We set~$B=k$ and the desired Nash welfare to $W=2^{dk}$.
This finishes the construction.

Assume that $(X,\calF,k)$ admits a solution~$\calF'$.
We claim that~$A':=\{a_i\in A\mid F_i\in \calF'\}$ is a fair knapsack with the desired value of the Nash welfare (note that~$|A'|=k$).
By the construction, each item~$a\in A$ contributes one to exactly~$d$ voters.
Moreover, each distinct $a,a'\in A'$ contribute to disjoint sets of voters.
Hence,
\begin{align*}
\prod_{1\leq i\leq n}\left(1+\sum_{a\in A'}u_i(a)\right)=2^{dk} \text{.}
\end{align*}

Conversely, let~$A'\subseteq A$ be a fair knapsack, and
let $\calF'=\{F_i\in \calF\mid a_i\in A'\}$.
We claim that $\calF'$ forms a solution to~$(X,\calF,k)$.
First, observe that~$\sum_{1\leq i\leq n}\sum_{a\in A'}u_i(a)=|A'|d$.
Let $M:=\{i\in[n]\mid x_i\in \bigcup \calF'\}$ be the set of elements covered by $\calF'$.
Note that $1\leq|M|\leq |A'|\cdot d$.
Then
\begin{align*}
\prod_{i\in M}\left(1+\sum_{a\in A'}u_i(a)\right)\leq \left(1+\frac{|A'|d}{|M|}\right)^{|M|} \leq 2^{|A'|d}\leq 2^{kd}.
\end{align*}
For the second inequality, observe that the function $(1+y/x)^x$ is increasing on the interval $(0,y]$ for every~$y>0$.
Hence, we have that $|A'|=k$ and $|M|=k\cdot d$.
Thus, $\calF'$ is a set of exactly~$k$ pairwise disjoint sets.
\end{proof}

The proof of \mbox{\cref{thm:nphardfairknaps} (3)} uses the reduction from \textsc{Exact Regular Set Packing (ERSP)}. Since \textsc{ERSP} is $\wone$-hard with respect to the size of the solution~\cite{aus-atr-pro:j:structure}, we get the following corollary.

\begin{corollary}
  \label{cor:wonefk}
  \fknap{} is $\wone$-hard when parameterized by the budget, even if all utilities are in~$\{0,1\}$ and all costs are equal to one.
\end{corollary}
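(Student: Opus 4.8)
The plan is to observe that Corollary~\ref{cor:wonefk} follows essentially for free from the reduction already given in the proof of \cref{thm:nphardfairknaps}(3). That reduction is a polynomial-time many-one reduction from \textsc{Exact Regular Set Packing} to \fknap{} in which the budget~$B$ of the produced instance equals the solution size parameter~$k$ of the source instance, all produced items have unit cost, and all produced utilities lie in~$\{0,1\}$. Since the mapping of the parameter is $k \mapsto B = k$ (in particular it is computable and depends only on~$k$), this is in fact a parameterized reduction from $(\textsc{ERSP}, k)$ to $(\fknap{}, B)$ that preserves the restrictions on costs and utilities.

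Concretely, the steps I would carry out are: (i) recall that \textsc{ERSP} is $\wone$-hard when parameterized by the size~$k$ of the sought packing, as established by Austrin, Atri, and Pontil (the paper cites~\cite{aus-atr-pro:j:structure}); (ii) verify that the reduction from the proof of \cref{thm:nphardfairknaps}(3) runs in polynomial time (which it does: it creates $m$ items and $n$ voters with a single arithmetic step per entry, and computes $W = 2^{dk}$, whose bit-length is polynomial in the input) and that it sets $B = k$; (iii) conclude that composing the $\wone$-hardness of \textsc{ERSP} parameterized by~$k$ with this reduction yields $\wone$-hardness of \fknap{} parameterized by~$B$; and (iv) note that the reduction only ever produces instances with unit costs and $\{0,1\}$-utilities, so the hardness holds even under those restrictions, which is exactly the statement of the corollary.

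I do not expect any real obstacle here — this is a packaging argument rather than a new proof. The only point that needs a moment's care is making sure the reduction is legitimately a \emph{parameterized} reduction and not merely a many-one reduction: one must check that the new parameter~$B$ is bounded by a computable function of the old parameter~$k$, and here we have the sharpest possible bound, $B = k$. A secondary point worth a sentence is that $\wone$-hardness is inherited under parameterized reductions, so transitivity of the reduction chain (\textsc{ERSP}$/k \le_{\mathrm{fpt}} \fknap{}/B$) gives the desired conclusion. If one wanted to be fully self-contained one could also remark that the correctness argument is precisely the one already written out in the proof of \cref{thm:nphardfairknaps}(3), so nothing new needs to be re-established.
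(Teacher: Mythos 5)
Your proposal is correct and is essentially identical to the paper's own argument: the paper also obtains \cref{cor:wonefk} by observing that the reduction in the proof of \cref{thm:nphardfairknaps}(3) is a parameterized reduction with $B=k$, unit costs, and $\{0,1\}$ utilities, combined with the $\wone$-hardness of \textsc{ERSP} with respect to the solution size.
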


Using a different construction, we can show that for the combination of the two parameters---the number of voters and the budget---we still get fixed-parameter intractability.

\begin{theorem}
  \label{thm:fair_knapsack_wone}
\fknap{} is $\wone$-hard when parameterized by the number of voters and the budget, even if the utilities and the budget are represented in unary encoding and the costs of all items are equal to one.
\end{theorem}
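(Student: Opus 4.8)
The plan is to reduce from a $\wone$-hard problem whose solutions have bounded size, so that both the number of voters and the budget stay small. A natural candidate is \textsc{Clique}: given a graph $G=(W,E)$ and an integer $k$, decide whether $G$ contains a clique on $k$ vertices. This is $\wone$-hard parameterized by $k$, and the classical parameterized reduction style (``pick $k$ vertices, pick $\binom{k}{2}$ edges, verify incidences'') fits well here since all the combinatorial ``slots'' are functions of $k$ alone. I would set the budget to $B=k+\binom{k}{2}$ (or a slightly padded version), have items corresponding to vertices and edges of $G$, and have a small number---$O(k^2)$, hence bounded by the parameter---of voters whose utilities enforce that a solution of cost $B$ must consist of exactly $k$ vertex-items and $\binom{k}{2}$ edge-items forming a clique. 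The key point is that the Nash-welfare objective $\prod_{v_i} (1+\sum_{a\in S} u_i(a))$ is, for fixed total utility mass, maximized when the per-voter utilities are balanced (the AM--GM / concavity-of-$\log$ phenomenon already exploited in the proof of \cref{thm:nphardfairknaps}(3)); so I would engineer the gadget so that the maximum Nash welfare is attained \emph{exactly} when each voter receives the same utility, which happens iff the chosen edge-items are precisely the $\binom{k}{2}$ edges spanned by the chosen vertex-items.

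Concretely, I would introduce one ``vertex-selection'' voter for each of the $k$ positions and use a selector gadget (items tagged by a guessed position $p\in[k]$) so that exactly one vertex is chosen per position; similarly one ``edge-position'' voter for each of the $\binom{k}{2}$ unordered pairs of positions. The incidence voters---one per pair $(p,q)$ of positions---would assign high utility to the edge-item for $\{i,j\}$ only when it is ``compatible'' with the vertices selected at positions $p$ and $q$; an incompatible choice creates an imbalance in that voter's utility and strictly lowers the product. Because unary encoding is required, I must keep all utilities polynomially bounded: this is fine, since balancing arguments work with utilities that are small integers (e.g.\ values in $\{0,1,\dots,\poly(k,|W|)\}$), and the target Nash welfare $W$ is then a product of $O(k^2)$ numbers each of size $\poly(k,|W|)$, which is still polynomially many bits---but note $W$ itself need not be given in unary, only the utilities and budget.

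The main obstacle, I expect, is the same one that always arises when emulating an ``exact'' or ``equality'' constraint with a \emph{product} objective: making the balancing argument tight in \emph{both} directions. For the forward direction it is easy---a genuine $k$-clique yields a perfectly balanced assignment, hence the claimed $W$. For the converse I must argue that \emph{any} knapsack of cost $\le B$ achieving Nash welfare $\ge W$ necessarily (i) spends its full budget, (ii) splits it as $k$ vertex-items plus $\binom{k}{2}$ edge-items (not, say, more vertices and fewer edges), and (iii) has all incidence voters perfectly balanced, forcing the edge-items to be exactly those induced by the vertex-items, and the vertex-items to be pairwise distinct. Step (iii) is delicate because the solver could try to ``cheat'' by trading utility between different voters; I would prevent this by giving each voter a large private baseline utility $L$ so that the relevant comparisons reduce to comparing $\prod(L+\varepsilon_i)$ against $\prod(L+\bar\varepsilon)$ where $\sum\varepsilon_i$ is fixed, at which point strict concavity of $\log$ gives the needed strict inequality unless all $\varepsilon_i$ are equal. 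A secondary technical point is ensuring the vertex-selection voters cannot be fooled into picking the same vertex twice or picking fewer than $k$ vertices; I would handle that with an additional ``distinctness'' voter per position pair, or by encoding position tags directly into the items' cost/utility so that a budget-$B$ solution is structurally forced to use one item per tag. I would then verify single-peakedness is \emph{not} needed here (this theorem makes no domain restriction), so the gadget may be as irregular as convenient.
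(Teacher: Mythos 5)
Your high-level plan is in fact the same as the paper's: the paper reduces from $k$-\textsc{Multicolored Clique}, uses one item per vertex and per edge with unit costs, sets $B = k + \binom{k}{2}$, introduces a parameter-bounded set of voters, and exploits the AM--GM/balancing property of the Nash product so that the target welfare is reachable exactly when every voter's total utility equals a common value $T$. However, the actual content of such a proof is the concrete, \emph{selection-independent} utility assignment, and this is where your proposal has a genuine gap. First, the balancing argument only yields a usable threshold if the total utility mass $\sum_i \varepsilon_i$ is the same for every budget-feasible selection; the paper enforces this by engineering the utilities so that \emph{every} item (vertex or edge) receives exactly the same total utility $kT$ from all voters combined, whence any $B$ items carry total $BkT$ and the product can reach $(T+1)^{kB}$ only under perfect balance. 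Your substitute---a ``large private baseline utility $L$'' per voter---cannot be implemented in this model (a voter's utility comes only from selected items, so there is nothing to attach a private baseline to), and even if it could, it would not fix the total mass: a deviating selection could simply gather more total utility unevenly and beat the balanced target, so strict concavity of $\log$ alone proves nothing.

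Second, your incidence gadget is described as voters who ``assign high utility to the edge-item only when it is compatible with the vertices selected at positions $p$ and $q$''---but utilities are fixed in advance and cannot depend on what is selected; turning incompatibility into imbalance with \emph{fixed} utilities is precisely the missing construction. The paper does it with complementary index encodings: for each ordered pair of colors, one voter gives utility $i$ to the $i$-th vertex of the first color and $T-i$ to each of its edges toward the second color, and a twin voter gives $T-i$ and $i$ respectively, so both voters reach exactly $T$ if and only if the chosen edge is incident to the chosen vertex; this same complementarity is also what makes the per-item totals constant. Likewise, distinctness of the $k$ vertices is obtained for free by reducing from the multicolored version (one voter per color with utility $T$ on that color class), rather than by the vaguely sketched position-tagging of plain \textsc{Clique}, which would require further consistency gadgets between vertex copies and edges. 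Until these gadgets are specified and verified (cf.\ the argument in the proof of \cref{thm:fair_knapsack_wone}, and compare the balancing step in \cref{thm:nphardfairknaps}(3)), the converse direction of your reduction does not go through.
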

\begin{proof}
We provide a parameterized reduction from the $k$-\textsc{Multicolored Clique} problem, which is known to be $\wone$-hard with respect to the number of colors. 
Let $I$ be an instance of $k$-\textsc{Multicolored Clique}. 
In $I$ we are given a graph $G$ with the set~$V(G)$ of vertices and the set~$E(G)$ of edges, a natural number $k \in \naturals$, and a coloring function $f\colon V(G) \to [k]$ that assigns one of~$k$~colors to each vertex. We ask if $G$ contains $k$ pairwise connected vertices, each having a different color. 
Without loss of generality we assume that~$k \geq 2$.

From~$I$ we construct an instance $I_{\mathrm{F}}$ of \fknap{} as follows (we refer to~\cref{fig:fair_knapsack_wone} for an illustration). 
\begin{figure}[t]
  \centering
    \begin{tikzpicture}
    \usetikzlibrary{decorations.pathreplacing}
    \tikzstyle{xline}=[thin, dotted, gray];
    \tikzstyle{xbrace}=[decorate,decoration={brace,amplitude=3pt}];
    \def\xr{0.65}
    \def\yr{-0.65}
    \def\xmn{0.4}
    \def\xmx{15.33}
    \def\ymn{-0.25}
    \def\ymx{13.5}

    \newcommand{\txtdec}{\scriptsize}
    \newcommand{\txtdecx}{\tiny}

    \newcommand{\cone}[6]{
    \node at (1*\xr,#1*\yr)[]{#2};
    \node at (2*\xr,#1*\yr)[scale=0.7]{#3};
    \node at (3*\xr,#1*\yr)[]{#4};
    \node at (4*\xr,#1*\yr)[scale=0.7]{#5};
    \node at (5*\xr,#1*\yr)[]{#6};
    }
    \newcommand{\conex}[7]{
    \node at (1*\xr,#1*\yr)[scale=#7]{#2};
    \node at (2*\xr,#1*\yr)[scale=#7]{#3};
    \node at (3*\xr,#1*\yr)[scale=#7]{#4};
    \node at (4*\xr,#1*\yr)[scale=#7]{#5};
    \node at (5*\xr,#1*\yr)[scale=#7]{#6};
    }
    \newcommand{\ctwo}[6]{
    \node at (6*\xr,#1*\yr)[]{#2};
    \node at (7*\xr,#1*\yr)[scale=0.7]{#3};
    \node at (8*\xr,#1*\yr)[]{#4};
    \node at (9*\xr,#1*\yr)[scale=0.7]{#5};
    \node at (10*\xr,#1*\yr)[]{#6};
    }
    \newcommand{\ctwox}[7]{
    \node at (6*\xr,#1*\yr)[scale=#7]{#2};
    \node at (7*\xr,#1*\yr)[scale=#7]{#3};
    \node at (8*\xr,#1*\yr)[scale=#7]{#4};
    \node at (9*\xr,#1*\yr)[scale=#7]{#5};
    \node at (10*\xr,#1*\yr)[scale=#7]{#6};
    }
    \newcommand{\rec}[5]{
    \def\rxy{0.4}
    \draw[rounded corners, gray, very thin, fill=white] (#1*\xr-#3*\xr-\rxy*\xr,#2*\yr-#4*\yr-\rxy*\yr) rectangle (#1*\xr+#3*\xr+\rxy*\xr,#2*\yr+#4*\yr+\rxy*\yr);
    \node at (#1*\xr,#2*\yr)[]{#5};
    }

    \draw[rounded corners, gray, thin] (\xmn*\xr,0.5*\yr) rectangle (\xmx*\xr+0.25*\xr,13.5*\yr);

    \def\hy{-0.0}
    \cone{\hy}{$n^1_1$}{$\cdots$}{$n^1_i$}{$\cdots$}{$n^1_\ell$};
    \draw[xline] (5.5*\xr,\ymn*\yr) -- (5.5*\xr,\ymx*\yr);
    \ctwo{\hy}{$n^2_1$}{$\cdots$}{$n^2_j$}{$\cdots$}{$n^2_\ell$};
    \draw[xline] (10.5*\xr,\ymn*\yr) -- (10.5*\xr,\ymx*\yr);
    \node at (11*\xr,\hy*\yr)[scale=0.7]{$\cdots$};
    \draw[thin, dashed, gray] (11.5*\xr,\ymn*\yr) -- (11.5*\xr,\ymx*\yr);
    \node at (12*\xr,\hy*\yr)[scale=0.7]{$\cdots$};
    \node at (13.5*\xr,\hy*\yr)[]{$\{n^1_i,n^2_j\}$};
    \node at (15*\xr,\hy*\yr)[scale=0.7]{$\cdots$};

    \def\hx{-0.5}
    \node at (\hx*\xr,1*\yr)[]{$v_1$};
    \node at (\hx*\xr,2*\yr)[]{$v_2$};
    \node at (\hx*\xr,3*\yr)[]{\txtdec$\vdots$};
    \draw[xline] (-1*\xr,3.5*\yr) -- (\xmx*\xr,3.5*\yr);
    \node at (\hx*\xr,4*\yr)[]{$v_{\{1,2\}}^1$};
    \node at (\hx*\xr,5*\yr)[]{\txtdec$\vdots$};
    \node at (\hx*\xr,6*\yr)[]{$v_{\{1,2\}}^{k-2}$};
    \node at (\hx*\xr,7*\yr)[]{\txtdec$\vdots$};
    \node at (\hx*\xr,8*\yr)[]{\txtdec$\vdots$};
    \draw[xline] (-1*\xr,8.5*\yr) -- (\xmx*\xr,8.5*\yr);
    \node at (\hx*\xr,9*\yr)[]{$v_{(1,2)}^a$};
    \node at (\hx*\xr,10*\yr)[]{$v_{(1,2)}^b$};
    \node at (\hx*\xr,11*\yr)[]{$v_{(2,1)}^a$};
    \node at (\hx*\xr,12*\yr)[]{$v_{(2,1)}^b$};
    \node at (\hx*\xr,13*\yr)[]{\txtdec$\vdots$};

    % entries

    \cone{1}{$T$}{$\cdots$}{$T$}{$\cdots$}{$T$};
    \ctwo{2}{$T$}{$\cdots$}{$T$}{$\cdots$}{$T$};
    \node at (11*\xr,3*\yr)[scale=0.7]{$\cdots$};

    \node at (13.5*\xr,4*\yr)[]{$T$};
    \node at (13.5*\xr,5*\yr)[scale=0.7]{$\vdots$};
    \node at (13.5*\xr,6*\yr)[]{$T$};

    \cone{9}{$1$}{$\cdots$}{$i$}{$\cdots$}{$\ell$};
    \node at (13.5*\xr,9*\yr)[]{\txtdec$T-i$};
    \conex{10}{\txtdec$T-1$}{\txtdec$\cdots$}{\txtdec$T-i$}{\txtdec$\cdots$}{\txtdec$T-\ell$}{0.675};
    \node at (13.5*\xr,10*\yr)[]{$i$};

    \ctwo{11}{$1$}{$\cdots$}{$j$}{$\cdots$}{$\ell$};
    \node at (13.5*\xr,11*\yr)[]{\txtdec$T-j$};
    \ctwox{12}{\txtdec$T-1$}{\txtdec$\cdots$}{\txtdec$T-j$}{\txtdec$\cdots$}{\txtdec$T-\ell$}{0.675};
    \node at (13.5*\xr,12*\yr)[]{$j$};

    \rec{8.5}{1}{2.5}{0}{$0$};
    \rec{3}{2.5}{2}{0.5}{$0$};
    \rec{11}{2}{0}{0}{$0$};
    \rec{8}{3}{2}{0}{$0$};
    \rec{13.5}{2}{1.5}{1}{$0$};

    \rec{6}{6}{5}{2}{$0$};
    \rec{13.5}{7.5}{0}{0.5}{$0$};

    \rec{8.5}{9}{2.5}{0}{$0$};\rec{8.5}{10}{2.5}{0}{$0$};
    \rec{3}{11}{2}{0}{$0$};\rec{11}{11}{0}{0}{$0$};
    \rec{3}{12}{2}{0}{$0$};\rec{11}{12}{0}{0}{$0$};
    \rec{13.5}{13}{0}{0}{$0$};

    %labeling

    \draw [xbrace] (-1.4*\xr,3.25*\yr) -- (-1.4*\xr,0.75*\yr) node [black,midway,xshift=-0.8*\xr cm]{(1)};
    \draw [xbrace] (-1.4*\xr,8.25*\yr) -- (-1.4*\xr,3.75*\yr) node [black,midway,xshift=-0.8*\xr cm]{(2)};
    \draw [xbrace] (-1.4*\xr,13.25*\yr) -- (-1.4*\xr,8.75*\yr) node [black,midway,xshift=-0.8*\xr cm]{(3)};

    \draw [xbrace] (0.75*\xr,-0.5*\yr) -- (11.25*\xr,-0.5*\yr) node [black,midway,yshift=-0.8*\yr cm]{$\triangleq V(G)$};
    \draw [xbrace] (11.75*\xr,-0.5*\yr) -- (\xmx*\xr,-0.5*\yr) node [black,midway,yshift=-0.8*\yr cm]{$\triangleq E(G)$};

    \end{tikzpicture}
  \caption{Illustration of the instance obtained in the proof of~\cref{thm:fair_knapsack_wone}.
  Herein, $n_b^c$ denotes vertex~$b$ in color class~$c$, where each color class contains~$\ell$ vertices.
  In the presented example, the vertices~$n_i^1$ and~$n_j^2$ are adjacent.
  Blocks containing a zero indicate that the corresponding entries are zero.
  }
  \label{fig:fair_knapsack_wone}
\end{figure}
Let $T = |V(G)|$. We set the set of items to $V(G) \cup E(G)$, that is we associate one item with each vertex and with each edge. We construct the set of voters as follows (unless specified otherwise, by default we assume that a voter assigns utility of zero to an item):
\begin{enumerate}
\item For each color we introduce one voter who assigns utility of $T$ to each vertex with this color. Clearly, there are~$k$~such voters.
\item For each pair of two different colors we introduce $k-2$ voters, each assigning utility of $T$ to each edge that connects two vertices with these two colors. There are $(k-2){k \choose 2}$ such voters.
\item For each ordered pair of colors, $c_1$ and $c_2$, with $c_1 \neq c_2$ we introduce two voters, call them $a$ and $b$, with the following utilities. 
Consider the set of vertices with color~$c_1$ and rename them in an arbitrary way so that they can be put in a sequence $n_1, n_2, \ldots, n_{\ell}$. 
For each $i \in [\ell]$ voter~$a$ assigns utility~$i$ to vertex~$n_i$ and utility~$(T - i)$ to each edge that connects~$n_i$ with a vertex with color~$c_2$. Voter~$b$ assigns utility~$(T-i)$ to~$n_i$ and utility~$i$ to each edge that connects~$n_i$ with a vertex with color~$c_2$. There are $2k(k-1)$ such voters.
\end{enumerate}
We set the cost of each item to one, and the total budget to $B = k + {k \choose 2}$. By a simple calculation one can check that the total number of voters is equal to $k + (k-2)\cdot {k \choose 2} + 2k(k-1) = kB$. This completes our construction.

First, observe that in total each item is assigned utility~$kT$ from all the voters. 
Indeed, each item corresponding to a vertex gets utility of $T$ from exactly one voter from the first group, and total utility of $(k-1) \cdot T$ from $2(k-1)$ voters from the third group. 
Similarly, each item corresponding to an edge gets utility of $T$ from $k-2$ voters from the second group, and total utility of $2 \cdot T$ from four voters from the third group. Thus, independently of how we select $B$ items, the sum of the utilities they are assigned from the voters will always be the same, that is $BkT$. 
Thus, clearly the Nash welfare would be maximized if the total utility assigned to the selected items by each voter is the same, and equal to~$T$. 
Only in such case the Nash welfare would be equal to $(T+1)^{kB}$. 
We will show, however, that each voter assigns to the set of $B$ items utility~$T$ if and only if~$k$ out of such items are vertices with~$k$ different colors, the remaining~${k \choose 2}$ of such items are edges, and each selected edge connects two selected vertices.

Indeed, it is easy to see that if the selected set of items has the structure as described above, then each voter assigns to this set the utility of $T$. We will now prove the other implication. Assume that for the set of $B$ items $S$ each voter assigns total utility of $T$. By looking at the first group of voters, we infer that $k$ items from $S$ correspond to the vertices, and that these $k$ vertices have different colors. By looking at the second group of voters, we infer that for each pair of two different colors, $S$ contains exactly one edge connecting vertices with such colors. 
Finally, by looking at the third group of voters we infer that each edge from~$S$ that connects colors~$c_1$ and~$c_2$ is adjacent to the vertices from~$S$ with colors~$c_1$ and~$c_2$. This completes the proof.       
\end{proof}

By~\cref{thm:fair_knapsack_wone} we presumably cannot hope for an FPT algorithm for~$\fknap{}$ when parameterized by the number~$n$ of voters. 
However, each instance~$I$ of~\fknap{} with unarily encoded utilities is solvable in~$O(|I|^{f(n)})$ time (that is, it is in~$\xp$ when parameterized by~$n$), where~$f$ is some computable function only depending on~$n$.

\begin{theorem}\label{thm:fair_number_of_voters_xp}
For unarily encoded utilities, \fknap{} is in $\xp$ when parameterized by the number of voters.
\end{theorem}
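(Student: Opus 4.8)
The plan is to design a dynamic program whose running time is polynomial in the input size for every fixed number~$n$ of voters. The key observation is that once we fix the vector of total utilities that each voter derives from the selected knapsack, the objective~$u_{\fair}$ is determined; so it suffices to search, over all achievable utility vectors, for one that is realizable within the budget and maximizes the product. First I would bound the relevant search space: for each voter~$v_i$ the total utility $\sum_{a\in S} u_i(a)$ lies in $\{0,1,\ldots,\hat u_i\}$ where $\hat u_i=\sum_{a\in A}u_i(a)$, so the number of candidate utility vectors is at most $\prod_{i=1}^n(\hat u_i+1)\le(\hat u+1)^n$, which is polynomial in $|I|$ for fixed~$n$ under unary encoding.

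Next I would set up the table. Order the items $a_1,\ldots,a_m$ arbitrarily and let $T[j,\vec{r}]$, for $j\in\{0,1,\ldots,m\}$ and $\vec{r}=(r_1,\ldots,r_n)$ with $0\le r_i\le\hat u_i$, denote the minimum cost of a subset $S\subseteq\{a_1,\ldots,a_j\}$ such that $\sum_{a\in S}u_i(a)=r_i$ for every $i\in[n]$ (or $\infty$ if no such subset exists). The recurrence is the standard knapsack-style branching on whether $a_j$ is taken:
\begin{align*}
T[j,\vec{r}]=\min\Bigl(\,T[j-1,\vec{r}],\ c(a_j)+T[j-1,\vec{r}-\vec{u}(a_j)]\,\Bigr),
\end{align*}
where $\vec{u}(a_j)=(u_1(a_j),\ldots,u_n(a_j))$ and the second term is taken only when $\vec{r}-\vec{u}(a_j)$ is coordinatewise nonnegative; the base case is $T[0,\vec{0}]=0$ and $T[0,\vec{r}]=\infty$ otherwise. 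Filling the table takes $O(m\cdot(\hat u+1)^n)$ time. Finally, over all $\vec{r}$ with $T[m,\vec{r}]\le B$, I would output the knapsack attaining the maximum value of $\prod_{i=1}^n(1+r_i)$; correctness follows because every feasible knapsack induces some achievable vector $\vec{r}$ with realizable cost $\le B$, and conversely every entry $T[m,\vec{r}]\le B$ is witnessed by a genuine knapsack (recoverable by the usual backtracking through the recurrence). The total running time is $O\bigl(m\cdot(\hat u+1)^n+n\cdot(\hat u+1)^n\bigr)$, which is $O(|I|^{f(n)})$ for $f(n)=n+O(1)$, establishing membership in~$\xp$.

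There is no serious obstacle here; the only point requiring a little care is confirming that the achievable utility vectors form a small enough set, which is exactly where the unary-encoding hypothesis is used (with binary encoding $\hat u$ could be exponential in $|I|$ and the argument collapses). One could alternatively phrase the dynamic program with the roles of cost and utility swapped—indexing the table by accumulated cost up to~$B$ and storing the best attainable utility vector—but indexing by the utility vector and minimizing cost is cleaner since $B$ may itself be large, whereas $\hat u$ is polynomially bounded by assumption.
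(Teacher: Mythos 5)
Your proposal is correct and follows essentially the same approach as the paper: a dynamic program indexed by the vector of per-voter utility totals (which has polynomially many entries under unary encoding for fixed~$n$), storing the minimum cost to realize each vector, and finally scanning the entries with cost at most~$B$ to maximize $\prod_{i}(1+r_i)$. The only difference is cosmetic---you use a prefix/take-or-skip recurrence while the paper indexes the table by the last selected item---so there is nothing substantive to add.
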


\begin{proof}
We provide an algorithm based on dynamic programing. 
We construct a table~$T$ where for each sequence of $n+1$ integers, $z_1, z_2, \ldots, z_n$, and $i$, entry $T[z_1, z_2, \ldots, z_n, i]$ represents the lowest possible value of the budget~$x$ such that there exists a knapsack~$S$ with the following properties:
\begin{inparaenum}[(i)]
\item the total cost of all items in the knapsack is equal to $x$ (i.e., $x = \sum_{a \in S}c(a)$),
\item the last index of an item in the knapsack $S$ is $i$ (i.e., $i = \max_{a_j \in S}j$), and
\item for each voter $v_j$ we have that $\sum_{a \in S}u_j(a) = z_j$.  
\end{inparaenum}
This table can be constructed recursively:
\begin{align*}
T[z_1, z_2, \ldots, z_n, i] = 
 c(a_i) + \min_{j \leq i} T[z_1-u_1(a_i),  \ldots, z_n - u_n(a_i), j] \text{.}
\end{align*}
We handle the corner cases by setting $T[0, 0, \ldots, 0, i] = 0$ for each~$i$, and $T[z_1, z_2, \ldots, z_n, i] = \infty$ whenever $z_i < 0$ for some $i \in [n]$. 

Clearly, if $n$ is fixed and if the utilities are represented in unary encoding, then the table can be filled in polynomial time. Now, it is sufficient to traverse the table and to find the entry $T[z_1, z_2, \ldots, z_n, i] \leq B$ which maximizes $\prod_{j = 1}^n (z_j+1)$. 
\end{proof}

On the positive side, with stronger requirements on the voters' utilities, that is, if the number of different values over the utilities is small, we can strengthen \cref{thm:fair_number_of_voters_xp} and prove membership in $\fpt$ (using integer linear programming).

\begin{theorem}
\fknap{} is $\fpt$ when parameterized by the combination of the number of voters and the number of different values that a utility function can take. 
\end{theorem}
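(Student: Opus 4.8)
The plan is to formulate the problem as an integer linear program (ILP) whose number of variables is bounded by a function of the two parameters, and then invoke Lenstra's theorem (ILP feasibility/optimization is FPT in the number of variables). Let $n$ be the number of voters and let $d$ be the number of distinct utility values that any $u_i$ can take; write $D=\{\delta_1,\dots,\delta_d\}$ for the set of possible values. Since a voter is specified, up to the data we care about, by which of the $d$ values she assigns to each item, an item $a$ is fully described (for the purpose of the objective and of the per-voter sums) by its \emph{type}: the vector $\big(u_1(a),\dots,u_n(a)\big)\in D^n$. There are at most $d^n$ such types; group the items by type, and for type $t$ let $m_t$ be the number of items of that type and $c^{(1)}_t\le c^{(2)}_t\le\cdots\le c^{(m_t)}_t$ their sorted costs. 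A crucial observation is that in an optimal knapsack, among the items of a given type we may assume we pick a \emph{cheapest prefix}: all items of one type contribute identically to the objective, so if we take $j$ of them we take the $j$ cheapest. Hence a solution is determined by choosing, for each type $t$, an integer $x_t\in\{0,1,\dots,m_t\}$ counting how many items of type $t$ are selected.

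First I would set up the ILP with these $d^n$ integer variables $x_t$. The budget constraint is $\sum_t \big(\text{cost of the }x_t\text{ cheapest items of type }t\big)\le B$; this is not linear in $x_t$ as written, so I would instead introduce, for each type $t$ and each $j\in[m_t]$, a $0/1$ indicator and replace it by the standard trick of a \emph{nondecreasing} selection: write $x_t=\sum_{j=1}^{m_t} y_{t,j}$ with $y_{t,j}\in\{0,1\}$ and the ordering constraints $y_{t,1}\ge y_{t,2}\ge\cdots\ge y_{t,m_t}$, so that $y_{t,j}=1$ exactly means ``take at least the $j$ cheapest'', and the budget constraint becomes the linear $\sum_{t}\sum_{j=1}^{m_t} c^{(j)}_t\, y_{t,j}\le B$. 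The per-voter total utility is the linear expression $U_i:=\sum_{t} (\text{value }\delta\text{ that type }t\text{ assigns to }v_i)\cdot x_t$. The number of variables is now $\sum_t m_t \le m$, which is \emph{not} bounded by the parameters — so this naive encoding is not yet enough, and handling that is the main obstacle (see below).

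The second ingredient is the objective. We want to maximize $\prod_{i=1}^n (1+U_i)$, which is not linear. Here I would use the standard reduction of Nash-welfare maximization to a small number of ILP calls: guess the vector $(U_1,\dots,U_n)$ of achieved per-voter utilities. The total utility $\sum_i U_i$ is at most $\hat u:=\sum_i\sum_a u_i(a)$, which is polynomially bounded under unary encoding; but there are up to $\hat u^{\,n}$ such vectors, which is XP, not FPT. To get FPT I would instead argue that the relevant information about $U_i$ is coarser: because each item carries one of only $d$ values per voter and the objective is symmetric in a suitable sense, one can bound the number of ``objective-relevant'' profiles. Concretely, I would make the number-of-items issue and the objective issue disappear together by a \textbf{types-of-types} argument: since there are only $d^n$ item types, and two types are interchangeable from the objective's point of view iff they induce the same value vector in $D^n$, the whole instance is, up to costs, described by the multiset $\{(t,c(a)):a\text{ of type }t\}$; I would then run Lenstra's algorithm on the ILP with the $d^n$ variables $x_t\in\{0,\dots,m_t\}$ directly (Lenstra's bound depends only on the number of variables, $d^n$, not on $m_t$), the linear budget constraint expressed via the cheapest-prefix cost function — which \emph{is} a piecewise-linear convex function of $x_t$ and hence can be handled by Lenstra's optimization form or by $O(\prod_t m_t)$-free reformulation using the convexity — and for the objective, branch over all $O(\hat u^{\,n})$...

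Let me instead state the clean route: fix the number of variables to $d^n$ (one $x_t$ per type, $0\le x_t\le m_t$); the budget constraint, using the cheapest-prefix convention, is that a certain nondecreasing convex cost function $C_t(x_t)$ summed over $t$ is $\le B$, and convex constraints of bounded dimension are still tractable; the objective $\sum_i \log(1+U_i)$ with $U_i$ linear in the $x_t$ is concave, so we are maximizing a concave function over a bounded-dimension integer polytope — this is exactly the setting of \emph{integer convex optimization in fixed dimension}, which is FPT by the Grötschel–Lovász–Schrijver / Khachiyan–Porkolab results (or, piecewise-linearize and use Lenstra). \textbf{The main obstacle} I anticipate is precisely making the ``cheapest prefix'' exchange argument airtight and phrasing the resulting problem so that the dimension is $d^n$ rather than $m$ — i.e., convincing the reader that although the costs $c(a)$ are arbitrary and numerous, only the parameter-bounded number of \emph{type counts} are genuine decision variables, with the costs entering solely through the fixed convex functions $C_t$. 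Once that reduction is in place, the FPT claim follows from any standard integer-convex-programming-in-fixed-dimension result, with running time $f(n,d)\cdot\poly(|I|)$.
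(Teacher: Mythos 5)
Your ``clean route'' is essentially the paper's proof: the paper also groups items into the at most $d^n$ types $A_z$ (one integer variable $x_z$ per type), uses the convex cheapest-prefix cost functions $f_z(x_z)$ in the budget constraint, and maximizes the concave objective $\sum_{v_i}\log\bigl(\sum_z z_i x_z\bigr)$, so your exchange argument and variable-count analysis match. The only difference is the black box invoked at the end: instead of generic integer convex optimization in bounded dimension, the paper combines Lenstra's theorem with the result of Bredereck~et~al.\ on ILPs admitting concave transformations in the objective and convex transformations in the constraints, which is precisely the setting you describe.
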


\begin{proof}
We will use the classic result of Lenstra~\cite{len:j:integer-fixed} which says that an integer linear program (ILP) can be solved in $\fpt$ time with respect to the number of integer variables. We will also use a recent result of Bredereck~et~al.~\cite{mixedIntegerProgrammingFPT} who proved that one can apply concave/convex transformations of certain variables in an ILP, and that such a modified program can be still solved in an $\fpt$ time. We construct an ILP as follows. 
Let $U$ be the set of values that a utility function can take. For each vector $z = (z_1, \ldots, z_n)$ with $z_i \in U$ for each $i$, we define $A_z$ as the set of items~$a$ such that for each voter~$v_i$ we have $u_i(a) = z_i$. 
Intuitively, $A_z$ describes a subcollection of the items with the same ``type'': such items are indistinguishable when we look only at the utilities assigned by the voters; they may vary only with their costs. For each such a set $A_z$ we introduce an integer variable $x_z$ which intuitively denotes the number of items from the optimal solution that belong to $A_z$. Further, we construct a function $f_z$ such that $f_z(x)$ is the cost of the $x$ cheapest items from $A_z$; clearly $f_z$ is convex. We formulate the following program:   

\begin{alignat*}{3}
 & \text{maximize:} && \sum_{v_i \in V}\log\left(\sum_{z \in U^n}z_i \cdot x_z\right) & \\
 & \text{subject to:} \quad && \sum_{z \in U^n}f_z(x_z) \leq B &\\
                 &          &&  x_{z} \in \integers, \quad & z \in U^n 
\end{alignat*}
The above program uses concave transformations (logarithms) for the maximized expression, and convex transformations (functions~$f_z$) in the left-hand sides of the constraints, so we can use the result of Bredereck~et~al.~\cite{mixedIntegerProgrammingFPT} and claim that this program can be solved in an $\fpt$ time with respect to the number of integer variables. This completes the proof.
\end{proof}

\subsection{Fair Knapsack under Restricted Domains}

In contrast to~\ibknap{} and~\dknap{}, both being solvable in polynomial time on restricted domains, \fknap{} remains $\np$-hard on utility profiles that are even both, single-peaked and single-crossing.

\begin{theorem}
  \label{thm:fair_knapsack_hard_sp}
 \fknap{} is $\np$-hard even on single-peaked single-crossing domains, when the costs of all items are equal to one, and the utilities of each voter come from the set $\{0, \ldots, 6\}$.
\end{theorem}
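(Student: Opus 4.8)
The plan is to give a polynomial-time many-one reduction from the (\np-hard) \emph{exact cover} problem: given a ground set~$U$ and a family~$\mathcal{S}$ of subsets of~$U$, decide whether some subfamily partitions~$U$; for clean arithmetic one may assume all sets have the same size~$d$ (e.g.\ \textsc{X3C}), so that a solution, if it exists, uses exactly $q:=|U|/d$ sets. From such an instance I would build a \fknap{} instance with all costs equal to~$1$ and all utilities in~$\{0,\dots,6\}$, in which the items split into \emph{set items}~$a_S$ (one per~$S\in\mathcal{S}$) together with a controlled supply of \emph{filler items}, and the voters split into \emph{element voters}~$v_e$ (one per~$e\in U$) together with many \emph{structure voters} whose only job is to pin down the orders witnessing single-peakedness and single-crossingness. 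Since $u_{\fair}$ is monotone under adding items, an optimal knapsack uses exactly~$B$ items, so the design only has to control which $B$-element sets can reach the target value.

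The correctness engine is the balancing argument already used in \Cref{thm:nphardfairknaps}(2) and in the proof of \Cref{thm:fair_knapsack_wone}. I would calibrate the utilities so that \emph{every} $B$-item knapsack hands the electorate the \emph{same} total utility~$\Sigma=\sum_{v_i\in V}\sum_{a\in S}u_i(a)$, each set item's contribution being topped up by structure/filler voters to a fixed aggregate. Writing $t_i=\sum_{a\in S}u_i(a)$, we then have $u_{\fair}(S)=\prod_i(1+t_i)$ with $\sum_i t_i=\Sigma$ fixed, so by strict concavity of $x\mapsto\log(1+x)$ the product is maximised exactly when the loads~$t_i$ are as equal as the (integer) arithmetic permits. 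The construction is arranged so that perfect balance among the element voters forces each element of~$U$ to lie in the same number~$c$ of chosen set items; combined with the budget pinning the number of chosen set items to~$q$ and a double-counting of set--element incidences ($cd\cdot q = d\cdot q$), this gives $c=1$, i.e.\ an exact cover, and conversely an exact cover yields a perfectly balanced knapsack. Setting the target Nash welfare~$W^\ast$ to this balanced value makes ``$u_{\fair}(S)\ge W^\ast$'' equivalent to ``the exact-cover instance is a yes-instance'', with both directions then essentially a computation.

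The genuine difficulty — and the reason the reduction is intricate — is to realise all of this with a profile that is \emph{simultaneously} single-peaked and single-crossing while every utility lies in~$\{0,\dots,6\}$. By Peters~\cite{Pet17a} the $\{0,1\}$-valued single-peaked case is polynomial-time solvable, so the value range cannot be collapsed and the extra values must be put to work. I would fix, once and for all, a single linear order~$\triangleleft_A$ of the items and a single order~$\triangleleft_V$ of the voters, lay the set system out along~$\triangleleft_A$, and order the voters by the location of their peak; each voter is made unimodal along~$\triangleleft_A$ by using the values $6,5,\dots,1,0$ as a short, strictly decreasing slope away from its peak region, which is also what is needed to make, for any two items, the set of voters preferring the later one a consecutive block of~$\triangleleft_V$ (single-crossingness). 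The catch is that single-peakedness is a genuinely one-dimensional constraint — a voter's positive-utility items must form an interval of~$\triangleleft_A$ — so the arbitrary incidence structure of an exact-cover instance cannot simply be scattered along the line; the bulk of the construction consists of structure voters and filler items that both fix~$\triangleleft_A$ and~$\triangleleft_V$ and ``relay'' the choice of each set item to every element-check it must influence, all within the bandwidth~$\{0,\dots,6\}$ (and arranged so that the many items at utility~$0$ never split a single-crossing block into two). I expect precisely this embedding — fitting a global combinatorial condition into one single-peaked, single-crossing layout of small bandwidth — to be the main obstacle; once~$\triangleleft_A$, $\triangleleft_V$, and the utility table are fixed, checking single-peakedness, single-crossingness, the equal-total-utility property, and the two directions of correctness are all routine.
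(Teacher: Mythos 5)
Your high-level strategy is the same as the paper's: reduce from \textsc{X3C}, calibrate the utilities so that every budget-exhausting knapsack receives the same aggregate utility, invoke concavity of $x\mapsto\log(1+x)$ to argue that the target Nash welfare is attainable only by an (almost) perfectly balanced load vector, and show that balance forces an exact cover. That part of the proposal is sound. However, there is a genuine gap: the entire content of the theorem lies in exhibiting a concrete utility profile that realizes this plan while being simultaneously single-peaked and single-crossing with all utilities in $\{0,\dots,6\}$, and you do not construct it --- you explicitly defer it (``I expect precisely this embedding \dots to be the main obstacle''). Your sketch of that embedding (short decreasing slopes $6,5,\dots,0$ around each voter's peak, plus unspecified ``structure voters'' and ``filler items'' that ``relay'' incidences) is not a construction, and as stated it is unlikely to work: an element voter whose positive-utility items must form an interval of $\triangleleft_A$ cannot directly mark the (arbitrarily scattered) sets containing that element, and the many zero-utility items you introduce create exactly the single-crossing tie problems you mention in passing. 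Without this gadget there is no reduction, so the proof is incomplete at its core.

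For comparison, the paper resolves this obstacle without fillers or local slopes. Each set $F_i$ is represented by a \emph{mirrored pair} of unit-cost items $a_i$ and $a_{2m-i+1}$; two extreme voters and, for each $i\in[m]$, a pair of $\{0,3,6\}$-step voters $y^{(i)}_1,y^{(i)}_2$ force any balanced solution to be symmetric ($a_i\in S\iff a_{2m-i+1}\in S$) and to contain exactly $k$ pairs; and each element $e_i$ gets two voters whose utilities are \emph{cumulative} incidence counts (prefix count on the first half, $3$ plus suffix count on the mirrored half). Because each element lies in exactly three sets, these values stay in $\{0,\dots,6\}$, every voter's utility is monotone along the item order (hence single-peaked with peak at an end, and single-crossing once increasing voters precede decreasing ones), and for a symmetric selection the pair $a_{b_j},a_{2m-b_j+1}$ contributes $6$ to an element voter, plus an extra $1$ exactly when $e_i\in F_{b_j}$ --- so the balanced targets $6k$ versus $6k+1$ detect coverage. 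This cumulative-count-plus-mirroring device is precisely the ``relay'' mechanism your proposal is missing; to complete your proof you would need to supply it (or an equivalent gadget) explicitly and verify single-peakedness, single-crossingness, the equal-aggregate-utility property, and both directions of correctness against it.
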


\begin{proof}
 We give a many-one reduction from the $\np$-hard \textsc{Exact-Set-Cover (X3C)} problem: 
 Given a universe~$U$ with~$3k$ elements and a set~$\calF$ of $3$-sized subsets of~$U$, the question is to decide whether there exist exactly~$k$ subsets in~$\calF$ that cover~$U$. 
 Without loss of generality, we can additionally assume that each element in~$U$ appears in exactly three sets from~$\calF$.
 Given an instance $(U=\{e_1,\ldots,e_n\},\calF=\{F_1,\ldots,F_m\})$ of \textsc{X3C} (note that~$n=m=3k$), we compute an instance of the problem of computing a fair knapsack as follows (the utilities of the voters are depicted in \cref{fig:fk-spsc}). 
 \begin{figure}[t]
 \centering
 \begin{tikzpicture}
    \tikzstyle{xline}=[thick];
    \def\xres{0.8}
    \def\yres{0.1}
    \def\mx{7}
    \def\tmx{2*\mx}
    \def\xs{\tmx*\xres}
    \def\ys{12*\yres}
    \def\yss{25*\yres}
    \def\ymx{15}
    % dashed vertical lines 
    \draw[-, dashed, lightgray] (0.5*\xres,\ymx*\yres) -- (0.5*\xres,-2.5*\yss);
    \draw[-, dashed, lightgray] (1.5*\xres,\ymx*\yres) -- (1.5*\xres,-2.5*\yss);
    \draw[-, dashed, lightgray] (0.38*\mx*\xres,\ymx*\yres) -- (0.38*\mx*\xres,-2.5*\yss);
    \draw[-, dashed, lightgray] (0.5*\mx*\xres,\ymx*\yres) -- (0.5*\mx*\xres,-2.5*\yss);
    \draw[-, dashed, lightgray] (0.86*\mx*\xres,\ymx*\yres) -- (0.86*\mx*\xres,-2.5*\yss);
    \draw[-, dashed, lightgray] (1*\mx*\xres,\ymx*\yres) -- (1*\mx*\xres,-2.5*\yss);
    \draw[-, dashed, lightgray] (\xs-0.5*\xres,\ymx*\yres) -- (\xs-0.5*\xres,-2.5*\yss);
    \draw[-, dashed, lightgray] (\xs-1.5*\xres,\ymx*\yres) -- (\xs-1.5*\xres,-2.5*\yss);
    \draw[-, dashed, lightgray] (\xs-0.38*\mx*\xres,\ymx*\yres) -- (\xs-0.38*\mx*\xres,-2.5*\yss);
    \draw[-, dashed, lightgray] (\xs-0.5*\mx*\xres,\ymx*\yres) -- (\xs-0.5*\mx*\xres,-2.5*\yss);
    \draw[-, dashed, lightgray] (\xs-0.86*\mx*\xres,\ymx*\yres) -- (\xs-0.86*\mx*\xres,-2.5*\yss);

    \node at (0*\xres,\ymx*\yres)[]{$a_1$};
    \node at (1*\xres,\ymx*\yres)[]{$a_2$};
    \node at (0.30*\mx*\xres,\ymx*\yres)[]{$\cdots$};
    \node at (0.5*\mx*\xres-0.4*\xres,\ymx*\yres)[]{$a_{i}$};
    \node at (0.68*\mx*\xres,\ymx*\yres)[]{$\cdots$};
    \node at (\mx*\xres-0.5*\xres,\ymx*\yres)[]{$a_{m}$};
    \node at (\xs-\mx*\xres+0.5*\xres,\ymx*\yres)[]{$a_m'$};
    \node at (\xs-0.68*\mx*\xres,\ymx*\yres)[]{$\cdots$};
    \node at (\xs-0.5*\mx*\xres+0.4*\xres,\ymx*\yres)[]{$a_i'$};
    \node at (\xs-0.30*\mx*\xres,\ymx*\yres)[]{$\cdots$};
    \node at (\xs-1*\xres,\ymx*\yres)[]{$a_2'$};
    \node at (\xs-0*\xres,\ymx*\yres)[]{$a_1'$};
%

    % x_1 and x_2
    \node at (0-\xres,3*\yres)[]{$x_1$};
    \draw[xline] (0-0.5*\xres,0) to node[above,midway]{0}(\mx*\xres,0); 
    \draw[xline] (\mx*\xres,0) --  (\mx*\xres,6*\yres);
    \draw[xline] (\mx*\xres,6*\yres) to node[above,midway]{6}(\tmx*\xres+0.5*\xres,6*\yres);

    \node at (0-\xres,3*\yres-\ys)[]{$x_2$};
    \draw[xline] (\xs-0+0.5*\xres,0-\ys) to node[above,midway]{0}(\xs-\mx*\xres,0-\ys); 
    \draw[xline] (\xs-\mx*\xres,0-\ys) --  (\xs-\mx*\xres,6*\yres-\ys);
    \draw[xline] (\xs-\mx*\xres,6*\yres-\ys) to node[above,midway]{6}(\xs-\tmx*\xres-0.5*\xres,6*\yres-\ys);

    % y_1^(i) and y_2^(i)
    \node at (0-\xres,3*\yres-\yss)[]{$y_1^{(i)}$};
    \draw[xline] (0-0.5*\xres,0-\yss) to node[above,midway]{0}(0.5*\mx*\xres,0-\yss); 
    \draw[xline] (0.5*\mx*\xres,0-\yss) --  (0.5*\mx*\xres,3*\yres-\yss);
    \draw[xline] (0.5*\mx*\xres,3*\yres-\yss) to node[above,midway]{3}(1.5*\mx*\xres,3*\yres-\yss);
    \draw[xline] (1.5*\mx*\xres,3*\yres-\yss) -- (1.5*\mx*\xres,6*\yres-\yss);
    \draw[xline] (1.5*\mx*\xres,6*\yres-\yss) to node[above,midway]{6}(2*\mx*\xres+0.5*\xres,6*\yres-\yss);

    \node at (0-\xres,3*\yres-\ys-\yss)[]{$y_2^{(i)}$};
    \draw[xline] (\xs-0+0.5*\xres,0-\ys-\yss) to node[above,midway]{0}(\xs-0.5*\mx*\xres,0-\ys-\yss); 
    \draw[xline] (\xs-0.5*\mx*\xres,0-\ys-\yss) --  (\xs-0.5*\mx*\xres,3*\yres-\ys-\yss);
    \draw[xline] (\xs-0.5*\mx*\xres,3*\yres-\ys-\yss) to node[above,midway]{3}(\xs-1.5*\mx*\xres,3*\yres-\ys-\yss);
    \draw[xline] (\xs-1.5*\mx*\xres,3*\yres-\ys-\yss) -- (\xs-1.5*\mx*\xres,6*\yres-\ys-\yss);
    \draw[xline] (\xs-1.5*\mx*\xres,6*\yres-\ys-\yss) to node[above,midway]{6}(\xs-2*\mx*\xres-0.5*\xres,6*\yres-\ys-\yss);

    % z_1^(i) and z_2^(i)
    \node at (0-\xres,3*\yres-2*\yss)[]{$z_1^{(\ell)}$};
    \draw[xline] (0-0.5*\xres,0-2*\yss) to node[above,midway]{0}(0.5*\xres,0-2*\yss); 
    \draw[xline] (0.5*\xres,0-2*\yss) --  (0.5*\xres,\yres-2*\yss);
    \draw[xline] (0.5*\xres,\yres-2*\yss) to node[above,midway]{1}(0.38*\mx*\xres,\yres-2*\yss);
    \draw[xline] (0.38*\mx*\xres,\yres-2*\yss) -- (0.38*\mx*\xres,2*\yres-2*\yss);
    \draw[xline] (0.38*\mx*\xres,2*\yres-2*\yss) to node[above,midway]{2}(0.86*\mx*\xres,2*\yres-2*\yss);
    \draw[xline] (0.86*\mx*\xres,2*\yres-2*\yss) -- (0.86*\mx*\xres,3*\yres-2*\yss);
    \draw[xline] (0.86*\mx*\xres,3*\yres-2*\yss) to node[above,midway]{3}(1*\mx*\xres,3*\yres-2*\yss);
    \draw[xline] (1*\mx*\xres,3*\yres-2*\yss) -- (1*\mx*\xres,4*\yres-2*\yss);
    \draw[xline] (1*\mx*\xres,4*\yres-2*\yss) to node[above,midway]{4}(1.5*\mx*\xres,4*\yres-2*\yss);
    \draw[xline] (1.5*\mx*\xres,4*\yres-2*\yss) -- (1.5*\mx*\xres,5*\yres-2*\yss);
    \draw[xline] (1.5*\mx*\xres,5*\yres-2*\yss) to node[above,midway]{5}(2*\mx*\xres-1.5*\xres,5*\yres-2*\yss);
    \draw[xline] (2*\mx*\xres-1.5*\xres,5*\yres-2*\yss) --(2*\mx*\xres-1.5*\xres,6*\yres-2*\yss);
    \draw[xline] (2*\mx*\xres-1.5*\xres,6*\yres-2*\yss) to node[above,midway]{6}(2*\mx*\xres+0.5*\xres,6*\yres-2*\yss);

    \node at (0-\xres,3*\yres-\ys-2*\yss)[]{$z_2^{(\ell)}$};
    \draw[xline] (\xs-0+0.5*\xres,0-\ys-2*\yss) to node[above,midway]{0}(\xs-0.5*\xres,0-\ys-2*\yss); 
    \draw[xline] (\xs-0.5*\xres,0-\ys-2*\yss) --  (\xs-0.5*\xres,\yres-\ys-2*\yss);
    \draw[xline] (\xs-0.5*\xres,\yres-\ys-2*\yss) to node[above,midway]{1}(\xs-0.38*\mx*\xres,\yres-\ys-2*\yss);
    \draw[xline] (\xs-0.38*\mx*\xres,\yres-\ys-2*\yss) -- (\xs-0.38*\mx*\xres,2*\yres-\ys-2*\yss);
    \draw[xline] (\xs-0.38*\mx*\xres,2*\yres-\ys-2*\yss) to node[above,midway]{2}(\xs-0.86*\mx*\xres,2*\yres-\ys-2*\yss);
    \draw[xline] (\xs-0.86*\mx*\xres,2*\yres-\ys-2*\yss) -- (\xs-0.86*\mx*\xres,3*\yres-\ys-2*\yss);
    \draw[xline] (\xs-0.86*\mx*\xres,3*\yres-\ys-2*\yss) to node[above,midway]{3}(\xs-1*\mx*\xres,3*\yres-\ys-2*\yss);
    \draw[xline] (\xs-1*\mx*\xres,3*\yres-\ys-2*\yss) -- (\xs-1*\mx*\xres,4*\yres-\ys-2*\yss);
    \draw[xline] (\xs-1*\mx*\xres,4*\yres-\ys-2*\yss) to node[above,midway]{4}(\xs-1.5*\mx*\xres,4*\yres-\ys-2*\yss);
    \draw[xline] (\xs-1.5*\mx*\xres,4*\yres-\ys-2*\yss) -- (\xs-1.5*\mx*\xres,5*\yres-\ys-2*\yss);
    \draw[xline] (\xs-1.5*\mx*\xres,5*\yres-\ys-2*\yss) to node[above,midway]{5}(\xs-2*\mx*\xres+0.5*\xres,5*\yres-\ys-2*\yss);
    \draw[xline] (\xs-2*\mx*\xres+0.5*\xres,5*\yres-\ys-2*\yss) --(\xs-2*\mx*\xres+0.5*\xres,6*\yres-\ys-2*\yss);
    \draw[xline] (\xs-2*\mx*\xres+0.5*\xres,6*\yres-\ys-2*\yss) to node[above,midway]{6}(\xs-2*\mx*\xres-0.5*\xres,6*\yres-\ys-2*\yss);
  \end{tikzpicture}  
  \caption{Visualization of the utilities of the voters used in the proof of Theorem~9. %
  The solid lines can be interpreted as plots depicting the utilities of the voters from different items. For instance, agent $x_1$ assigns utility of $0$ to the items $a_1, \ldots a_m$, and utility of $6$ to the items $a_{m + 1}, \ldots a_{2m}$ (note that~$a_j'=a_{2m-j+1}$). Agents $z_1^{(\ell)}$ and $z_2^{(\ell)}$ depicted in the figure correspond to the element $e_{\ell}$ such that~$e_\ell\in F_2\cap F_i\cap F_m$.}
  \label{fig:fk-spsc}
 \end{figure}
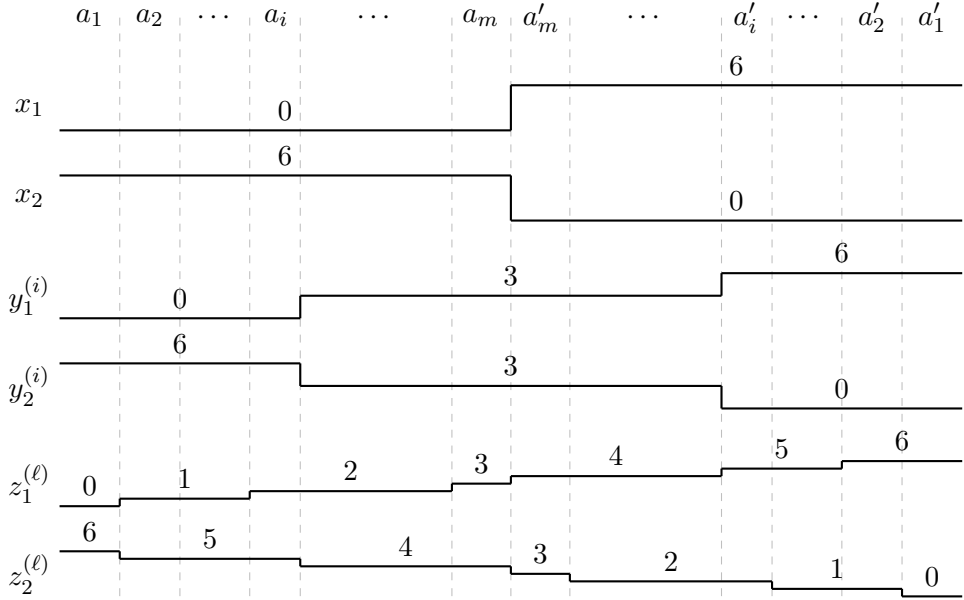

 First, for each~$i\in[m]$, we introduce two items, $a_i$ and $a_{2m-i}$, that correspond to set~$F_i$, each with the cost of one.
 Further, we introduce three different types of voters:
 \begin{enumerate}[(1)]
  \item We add two voters~$x_1$ and~$x_2$ with $u_{x_1}(a_i)=u_{x_1}(a_{m+i})-6=0$ and $u_{x_2}(a_{m+i})=u_{x_2}(a_i)-6=0$ for all $i\in[m]$. 

  \item For each~$i\in [m]$, we add two voters~$y^{(i)}_1$ and $y^{(i)}_2$ with
  \[ u_{y^{(i)}_1}(a_j) = 
      \begin{cases} 
	 0, & 1\leq j\leq i, \\ 
	 3,& i<j<2m-i+1, \\ 
	 6,& 2m-i+1\leq j\leq 2m, 
      \end{cases} \]
  and $u_{y^{(i)}_2}(a_j)=u_{y^{(i)}_1}(a_{2m-j+1})$.
  \item For each~$i\in [n]$, we add two voters $z^{(i)}_1$ and $z^{(i)}_2$ with $u_{z^{(i)}_1}(a_j) = f_i(a_j)$ and~$u_{z^{(i)}_2}(a_j) = u_{z^{(i)}_1}(a_{2m-j+1})$, where for~$j \in [m]$
  \begin{align*}
    &f_i(a_j) = |\{\ell\mid 1\leq \ell\leq j, e_i\in F_\ell\}|,\\
    &f_i(a_{2m - j +1}) = 3 + |\{\ell\mid j\leq \ell\leq m, e_i\in F_\ell\}|.
  \end{align*}

 \end{enumerate}
 We set the budget to~$B=2k$ and the required Nash welfare to~$W=(6k+1)^{2+2m}(6k+2)^{2n}$.
 
 It is apparent that with the order~$(a_1,\ldots,a_{2m})$, this profile is single-peaked.
 For single-crossingness, note that the utilities of agents $x_i$, $y_i^{(j)}$, $z_{i}^{(j')}$ are increasing over~$(a_1,\ldots,a_{2m})$ if~$i=1$, and decreasing, if~$i=2$.
 Hence, the order of voters
 \begin{align*}
 (x_1,y_1^{(1)},\ldots,y_m^{(1)},z_1^{(1)},\ldots,z_m^{(1)},x_2,y_1^{(2)},&\ldots,y_m^{(2)},z_1^{(2)},\ldots,z_m^{(2)})
 \end{align*}
 witnesses single-crossingness.
 
 We will prove that $(U=\{e_1,\ldots,e_{3k}\},\calF=\{F_1,\ldots,F_m\})$ is a yes-instance for \textsc{X3C} if and only if the constructed instance of~\fknap{} is a yes-instance.
 
 ($\Rightarrow$)
 Let $\calF'=\{F_{b_1},\ldots,F_{b_k}\}\subseteq \calF$ be an exact cover of~$U$.
 We claim that $S=\{a_{b_i},a_{2m-b_i+1}\mid i\in [k]\}$ is a fair knapsack.
 First observe that $c(S)=2k\leq B$.
 We consider the welfare for each of the three types~(1)--(3) of voters separately.
 
 For~$x_1$ and~$x_2$ (1), we have $\sum_{a \in S} u_{x_1}(a)=\sum_{a \in S} u_{x_2}(a)=6k$.
 
 Next, consider the voters of type~(2).
 Consider~\mbox{$y^{(i)}_1$, $i\in[m]$}:
 \begin{align*}
    \sum_{j=1}^{k} [u_{y^{(i)}_1}(a_{b_j}) +  u_{y^{(i)}_1}(a_{2m-b_j+1})] &=\sum_{1\leq b_j\leq i} [u_{y^{(i)}_1}(a_{b_j}) +  u_{y^{(i)}_1}(a_{2m-b_j+1})] \\&\qquad+
    \sum_{ i<b_{j}\leq m} [u_{y^{(i)}_1}(a_{b_j}) +  u_{y^{(i)}_1}(a_{2m-b_j+1})] \\
    &= \big|\{j\mid 1\leq b_j\leq i\}\big| \cdot (0+6) + \big|\{j\mid i<b_{j}\leq m\}\big| \cdot (3+3) \\
    &= 6k \text{.}
 \end{align*}
 By symmetry, $\sum_{j=1}^{k} [u_{y^{(i)}_2}(a_{b_j}) +  u_{y^{(i)}_2}(a_{2m-b_j+1})] =6k$.
 
 Finally, consider the voters of type~(3).
 Consider a voter~$z^{(i)}_1$, $i\in[m]$.
 Let $j^*$ be the index such that~$e_i\in F_{b_{j^*}}$ (recall exact cover).
 We have
 \begin{align*}
   \sum_{a\in S}u_{z^{(i)}_1}(a) &= \sum_{j=1}^{k} [f_i(a_{b_j}) +  f_i(a_{2m-b_j+1})]  \\
    &= f_i(a_{b_{j^*}}) +  f_i(a_{2m-b_{j^*}+1}) + \sum_{j\in[k]\setminus \{j^*\}} [f_i(a_{b_j}) +  f_i(a_{2m-b_j+1})]  \\
    &= 6+1 + \sum_{j\in[k]\setminus \{j^*\}} \Big(|\{\ell\mid 1\leq \ell\leq b_j, e_i\in F_{\ell}\}| + 3 +|\{\ell\mid b_j\leq \ell\leq m, e_i\in F_{\ell}\}|\Big)  \\
    &= 6 + 1 + \sum_{j\in[k]\setminus \{j^*\}} \left(|\{\ell\mid 1\leq \ell\leq m, e_i\in F_{\ell}\}| + 3 \right) + \sum_{j\in[k]\setminus \{j^*\}} \big|\{e_i\} \cap F_{b_j}\big|  \\
    &= 6 + 1 + \sum_{j\in[k]\setminus \{j^*\}} 6 + \sum_{j\in[k]\setminus \{j^*\}} 0 \\
    &= 6k+1
 \end{align*}
 By symmetry, $\sum_{a\in S}u_{z^{(i)}_2}(a)=6k+1$.
 
 Hence, we get in total that the Nash welfare is equal to
 \begin{align*} 
  (1+6k)^{2}(1+6k)^{2m}(2+6k)^{2n}=W \text{.}
 \end{align*}
 
 ($\Leftarrow$) Let $S\subseteq \{a_1,\ldots,a_{2m}\}$ be a fair knapsack with $c(S)\leq 2k$ and with the Nash welfare at least equal to $W$.
 We will now show that the total utility that all voters assign to each item~$a_j \in A$ is equal to $6+6m+6n+3$. Indeed, the two voters from (1) assign to $a_j$ the total utility of $6$. Similarly, any pair of voters, $y^{(i)}_1$ and $y^{(i)}_2$, assigns utility of $6$ to $a_j$. Finally, observe that, whenever $e_i \in F_j$, then voters $z^{(i)}_1$ and $z^{(i)}_2$ assign utility of $7$ to $a_j$; otherwise they assign utility of $6$ to $a_j$. Since each set $F_j$ contains exactly 3 elements, we get that $a_j$ gets total utility of $(n-3)6 + 3\cdot 7$ from the voters from (3).

 Hence, $2k$ items contribute $2k(6+6m+6n+3)$ to the total utility, and so, for the Nash welfare to be equal to $W$, this total utility must be distributed as equally as possible among the voters. Specifically, $2m+2$ voters need to get the total utility of $6k$, and $2n$ voters must get the total utility of $6k+1$.

 Now, we claim that for each~$i\in[m]$, $a_i\in S\iff a_{2m-i+1}\in S$.
 Suppose this is not the case, and
 let $i\in[m]$ be the smallest index such that either (i)~$a_i\in S \land a_{2m-i+1}\not\in S$ or (ii) $a_i\not\in S \land a_{2m-i+1}\in S$.
 Consider the first case~(i).
 Let $k_1=|\{1\leq j\leq i\mid a_j\in S\}|$ and $k_2=|\{2m-i+1\leq j\leq 2m\mid a_j\in S\}|$.
 It holds that $k_1 \geq k_2+1$, and it follows for voter~$y^{(i)}_2$ that
 \begin{align*}
  \sum_{a_j \in S} u_{y^{(i)}_2}(a_j) &= k_1 6 + (2k-(k_1+k_2))\cdot 3 + k_2\cdot 0 \\
    &= 3(2k+(k_1-k_2)) \\
    &\geq 6k+3 \notin \{6k, 6k+1\}.
 \end{align*}
 Case~(ii) works analogously, and hence, our claim follows. 
 From this we infer that~$\sum_{a \in S} u_{y^{(\ell)}_i}(a)=6k$ for each~$i\in\{1,2\}$ and $\ell\in[m]$, and that $\sum_{a \in S} u_{x_i}(a)=6k$ for each~$i\in\{1,2\}$.
 Thus, for each voter $z$ from (3) it must be the case that $\sum_{a \in S} u_{z}(a)=6k+1$.
 
 Finally, we will prove that $\calF'=\{F_{b_1},\ldots,F_{b_k}\}=\{F_i\in F\mid a_i\in S,i\in[m]\}$ forms a cover of~$U$.
 Towards a contradiction suppose that there is an element~$e_i\in U$ such that $e_i$ is not covered by~$\calF'$.
 We consider voter~$z^{(i)}_1$.
 Observe that since~$e_i\not\in F_{b_j}$ for each $j\in[k]$, we have 
 \begin{align*}
  \sum_{j\in[k]} \big(u_{z^{(i)}_1}(a_{b_j}) + u_{z^{(i)}_1}(a_{2m-b_j+1})\big) &= \sum_{j\in[k]} 6 = 6k < 6k+1.
 \end{align*}
 Thus, we reached a contradiction, and consequently we get that every element in~$U$ is covered by~$\calF$. 
 This completes the proof.
\end{proof}

As we discussed in \cref{sec:defi}, if the voters' utilities come from the binary set $\{0,1\}$ and if the costs of the items are equal to one, then \fknap{} is equivalent to computing winners according to Proportional Approval Voting. 
For this case with single-peaked preferences, Peters~\cite{Pet17a} showed that the problem can be formulated as an integer linear program with totally unimodular constraints, and thus it is solvable in polynomial time. 
This makes our result interesting, as it shows that by allowing slightly more general utilities (coming from the set $\{0, \ldots, 6\}$ instead of $\{0, 1\}$) the problem becomes already $\np$-hard (even if we additionally assume single-crossingness of the preferences). 

\section{Conclusion}

We studied three variants of the knapsack problem in multiagent settings. 
One of these variants, selecting an individually best knapsack, has been considered in the literature before, and our work introduces the other two concepts: diverse and fair knapsack. 
Our paper establishes a relation between the multiagent knapsack model and a broad literature including work on multiwinner voting and on fair allocation. 
This way, we expose a variety of ways in which the preferences of the voters can be aggregated in different applications that are captured by the abstract model of the multiagent knapsack problem.    

Our complexity results are outlined in \Cref{tab:results}.
In summary, we showed that computing an individually~best or a diverse knapsack can be done efficiently under some constraints. 
On the contrary, we give multiple evidences that computing a fair knapsack is computationally hard. 
This also motivates the study of approximation and heuristic algorithms for computing a fair knapsack.   

\bibliographystyle{plainnat}
\bibliography{grypiotr2006} 

\end{document}